	\newcommand{\bbN}{\mathbb{N}}
\newcommand{\mcO}{\mathcal{O}}
\newcommand{\OO}{\mathcal{O}}
\renewcommand{\epsilon}{\varepsilon}
\renewcommand{\epsilon}{\varepsilon}
\newcommand{\ignore}[1]{}
\newcommand{\customlabel}[2]{%
   \protected@write \@auxout {}{\string \newlabel {#1}{{#2}{\thepage}{#2}{#1}{}} }%
   \hypertarget{#1}{\textbf{\normalsize{#2}}}
}
\newtheorem{definition}{Definition}[section]
\newtheorem{theorem}{Theorem}[section]
\newtheorem{corollary}{Corollary}[theorem]
\newtheorem{lemma}[theorem]{Lemma}
\newtheorem{remark}[theorem]{Remark}
\date{}
\title{Fast exact algorithms via the Matrix Tree Theorem}
\author{V. Arvind, Srijan Chakraborty, Samir Datta and Asif Khan\\
	\texttt{arvind@imsc.res.in, \{srijanc,sdatta,asifkhan\}@cmi.ac.in}}
\begin{document}
	\maketitle
	
	\abstract 
	Fast exact algorithms are known for Hamiltonian paths in undirected
and directed bipartite graphs through elegant though involved algorithms that
are quite different from each other. We devise algorithms that are simple
and similar to each other while having the same upper bounds. The common
features of these algorithms is the use of the Matrix-Tree theorem and sieving
using roots of unity.

Next, we use the framework to provide alternative algorithms to 
count perfect matchings in bipartite graphs on $n$ vertices, i.e., computing 
the $\{0,1\}$-permanent of a square $n/2 \times n/2$ matrix which runs 
in a time similar to Ryser. 

We demonstrate the flexibility of our method by counting the number of ways 
to vertex partition the graph into $k$-stars (a $k$-star consist of a tree 
with a root having $k-1$ children that are all leaves). Interestingly, 
our running time improves to $\OO^*((1+\epsilon_k)^n)$ with $\epsilon_k \rightarrow 0$
as $k \rightarrow \infty$. 

As an aside, making use of Bj\"orklund's algorithm for exact counting perfect 
matchings in general graphs we show that the count of maximum matchings 
can be computed in time $\OO^*(2^\nu)$ where $\nu$ is the size of a maximum 
matching. The crucial ingredient here is the famous Gallai-Edmonds 
decomposition theorem. 

All our algorithms run in polynomial space.


	\section{Introduction}
The determinant is ubiquitous in Mathematics and 
Theoretical Computer Science. It is the signed sum of the product of the 
entries of an $n \times n$ matrix,
 picked by a permutation by ranging over all the permutations.
The corresponding unsigned sum is called the permanent. 

The determinant is computable in polynomial time \cite{vzGG}
 and even efficiently parallelizable (see e.g.  \cite{Berk,MV}).
On the contrary, due to Valiant \cite{Valiant} we know that 
the permanent is $\shP$-hard to compute, i.e., for any $\NP$-language $L$
 counting the number of witnesses for a word being in $L$ is reducible to 
computing the permanent. 
The contrast between
the determinant and the permanent plays an important role in Algebraic Complexity
Theory. 

In more graph theoretic terms, the permanent of a $\{0,1\}$ matrix $A$ is
the count of the perfect matchings in the bipartite graph with $A$ as its 
biadjacency matrix\footnote{The biadjacency matrix has rows enumerated by
one part of the bipartite graph and columns enumerated by the other part with
all entries zero except the ones which correspond to an edge between
two vertices of opposite parts.}.

Another pertinent object is the so called Hamiltonian which differs from the
permanent (or the determinant) where we restrict the permutations to 
$n$-cycles. Naturally, the Hamiltonian of the adjacency matrix of a graph
is the count of all Hamiltonian cycles (i.e., simple cycles that include all 
vertices) in the graph. Like the permanent, the Hamiltonian is $\shP$-hard,
but detecting the existence of a Hamiltonian path is an $\NP$-hard problem,
unlike the detection of a perfect matching in bipartite graphs, which has been
known to be in polynomial time since the advent of the Hungarian Method \cite{kuhn}.

We will use a seemingly innocuous result to provide simple and uniform proofs
of the best known upper bounds
on these, two, hard to compute objects -- the permanent and the Hamiltonian
along with some of their restrictions and variants.
The result we use is the so-called Matrix Tree Theorem which counts
the number of spanning trees in a graph by reducing it to a single determinant
computation. Notice that here we are ambiguous about the exact version of 
matrix tree theorem we use -- apart from the theorem for undirected
graphs there is also one for counting directed spanning trees (or arborescences)
in directed graphs -- and we shall make use of both of these.

The trick is to view the patterns we want to count viz. perfect matchings/
Hamiltonian paths as a bijective subgraph of certain spanning trees in a 
graph closely related to the input graph. But then we'll have to ``sieve-out''
 all but these certain trees so that the count of remaining trees equals the
count of the patterns.

\subparagraph{Our technique and a summary of main results}
We associate a suitable weight with each edge
of the graph in terms of certain variables. Recall that for weighted graphs
with adjacency matrix $A$, the weight of a
permutation $\pi$ is the product of the entries $A_{i,\pi(i)}$ over $i \in [n]$.
Thus, if we plug in a product of certain variables as the weight of an edge, the
weight of each permutation will turn out to be a monomial. The matrix tree
theorem ensures that each (uncancelled) monomial occurring in a certain determinant corresponds to a tree. Thus, if we sum up
over the monomials, assuming that the signs
of all monomials agree, then the determinant will yield a polynomial, in which
none of the monomials cancel out. Furthermore, in each monomial the sequence 
of exponents of various variables will closely depend on the shape of the tree.
Thus, for example, if we can ensure that (a) the polynomial is homogeneous
(i.e., each monomial degree is the same) and (b) the monomials of interest
correspond to a pre-specified degree sequence (such as, all of them are $2$)
and (c) the unwanted monomials have at least one variable with an odd
exponent, then using well-known properties of monomials we can extract
their coefficients. This approach turns out to be elegant, trading off
some tricky algorithmic ideas (such as \cite{Williams09,FGS}) for the
simplicity of plugging in $\pm 1$ for each of the variables. This 
substitution ensures that all monomials which have degree $2$ in each
variable yield $2^{\text{\#variables}}$ while other monomials cancel out.
This will yield a scaled version of the desired monomial count.

We distil this argument into a recipe (which we describe and illustrate in
\hyperref{app:undirHamP}{Appl 0.1} via the example of counting Hamiltonian paths
in undirected graphs). Essentially, the best algorithm known for this
problem is via Held-Karp \cite{HeldKarp} that runs in $\OO^*(2^n)$
time. This is the ``best'' known algorithm because there is no
algorithm that runs faster than this by a factor of
$\Omega(2^{\epsilon n})$ for any constant $\epsilon > 0$. Our algorithm also runs in $\OO^*(2^n)$ time but is fundamentally
different from Held-Karp.

Next, we provide two more examples of this recipe application that match
the best known running times of $\mcO^*(2^{n/2})$ and
$\mcO^*(3^{n/2})$ for finding a Hamiltonian path in bipartite 
undirected and bipartite directed graphs, respectively. While we cannot
count Hamiltonian paths in these two cases because we are unable to ensure that
the signs of the monomials are the same, using the 
\emph{isolation} lemma \cite{MVV} we can extract a witnessing Hamiltonian
path in each of these cases, thereby solving the decision problem. We also present
some generalisations of these results.

Next, we turn to perfect matchings and show that perfect matchings can be 
counted as fast as Ryser's algorithm in the bipartite case through our recipe. We also show that matchings of size $k$ for any $k$ can be counted in Ryser time in bipartite graphs, i.e., $\OO(2^{n/2})$ time. It is known that this is $\#\Wone$-$\hard$ \cite{Curt}.

We also show that maximum matchings in bipartite graphs can be counted
in time $\OO(2^\nu)$ where $\nu$ is the size of the maximum matching.
Bj\"{o}rklund \cite{Bjorklund14} had already shown that ``Ryser-time'' 
suffices for counting perfect
matchings in general, i.e., non-bipartite graphs as well. As a short aside we show that
the Gallai-Edmonds decomposition -- a venerable technique in matching theory --
allows us to lift the maximum matching result to general graphs 
using Bj\"{o}rklund's result as a black box.

In perfect matchings we cover all the vertices with $n/2$ disjoint edges. A perfect matching is also called a perfect edge cover. 
As our final result we extend this result to count the number perfect $k$-star covers of graph on $n$ vertices, where a $k$-star is a tree with one vertex of degree $k-1$ attached to 
$k-1$ leaves\footnote{We do not claim this as an induced structure, i.e., the 
star could induce edges other than the $k-1$-spokes}. Somewhat paradoxically
the running time of our algorithm is $(1+\epsilon_k)^n$ for some $\epsilon_k \rightarrow 0$ as $k \rightarrow \infty$. Note that detecting whether a graph $G$ is $k$-star coverable is $\NP$-$\hard$ \cite[Problem GT12]{GJ} for $k\ge 3$ unlike detecting a perfect matching which is polynomial time solvable. We remark that all our algorithms run in polynomial space.

We summarise our results in \cref{tab:results}.
\begin{table}[h!]
  \centering
  \caption{Our Results}
  \label{tab:results}
  \begin{tabular}{ |l|ccc| } 
    \hline
    \textbf{Problem} & \textbf{Bound} & \textbf{Previous} & \textbf{Our Construction} \\
    \hline
    Undir. Ham. Path Count & $2^{n}$ & \cite{HeldKarp,Barvinok} & \hyperref[app:undirHamP]{Appl. 0.1} \\
    Dir. Ham. Path Count & $2^{n}$ & \cite{HeldKarp,Barvinok} & \hyperref[app:dirHamP]{Appl. 0.2} \\
    \hline
    Undir. Bip. Ham. Path Dec. & $2^{n/2}$ & \cite{Bjorklund14} & \hyperref[app:undirBipHamP]{Appl. 1.1} \\
    Dir. Bip. Ham. Path Dec.& $3^{n/2}$ & \cite{BKK17} & \hyperref[app:dirBipHamP]{Appl. 1.2} \\
    \hline
    Bip. Perfect Matching Count & $2^{n/2}$ & \cite{Ryser} & \hyperref[app:BipPMCount]{Appl. 2.1}  \\
    Maximum Matching Count & $2^{\nu}$ & & \hyperref[app:MMCount]{Appl. 2.2}  \\
    Perfect $k$-star Covering Count & $2^{n/(\frac{k}{3\log{k}})}$ & & \hyperref[app:kStarCount]{Appl. 2.3}  \\
    \hline
  \end{tabular}
\end{table}

\paragraph{Related Work}
Perhaps the first example of an algebraic algorithm for a graph problem dates back to Kirchoff's matrix tree theorem from 1847. It essentially reduces the problem of counting the number of spanning trees in an undirected graph to computing the determinant of a matrix associated with graph. More recently, in the context of parametrized and exact algorithms, algebraic methods have enjoyed tremendous success, starting with Koutis' reduction for the problem of detecting a $k$-path in a graph to detecting whether there exists a multilinear monomial in a polynomial that itself is obtained from a matrix multiplication operation~\cite{Koutis08}. With further refinement from Williams, one can detect $k$-paths in general graphs in $\OO^*(2^k)$ time with high success probability~\cite{Williams09}. Here, the polynomial (called walk polynomial) enumerates all $k$-length walks, and multilinear monomials in that correspond precisely to $k$-paths. The  underlying algebraic algorithm is that, given a polynomial over a field of characteristic $2$, (the polynomial is implicitly given by a monotone algebraic circuit) whether there exists a multilinear monomial of degree $k$ in the monomial-sum expansion of the polynomial in time $\OO^*(2^k)$ randomized time using group algebras. For polynomials given by arithmetic circuits over fields of arbitrary characteristic (where field operations can be done efficiently), Brand el al.~\cite{BDH18}, and Arvind et al.~\cite{ACDM18} gave randomized $\OO^*(4.32^k)$ time algorithm for detecting degree $k$-multilinear monomials in the polynomial. See also~\cite{Pratt19,Brand22, BKS23} for more on $k$-multilinear monomial detection.

In general, the enumerating polynomial method involves designing a polynomial that enumerates general enough substructures of the input graph, of a specified size as monomials, such that, for the desired substructures the corresponding monomials have a specific shape, e.g., multilinear, while undesired substructures do not have the specific shape. For another example, Bj\"{o}rklund's $\OO^*(1.66^n)$ time algorithm for finding Hamiltonian cycles~\cite{Bjorklund14}, that broke the long standing barrier of $O^*(2^n)$ running time, uses a polynomial that enumerates all labelled cycle covers in a graph. Since Hamiltonian cycles are also cycle covers, they too appear in the polynomial. Crucially, only for the non-Hamiltonian cycle covers the monomials corresponding to them pair up and thus cancel when evaluated in a characteristic $2$ field. And the algebraic techniques in algorithms has found many fruitful applications since, e.g., exact algorithms for problems like Steiner tree, disjoint paths, independent set, chromatic number, graph motif problem, etc. See~\cite{KoutisWilliams} for a survey.


The determinantal sieving technique \cite{EKW24}, introduced very recently, encompasses together many of the algebraic algorithms in a common framework. 
Given a polynomial $P(X)$ of degree $d$ over a field of characteristic $2$, and a linear matroid $M=(X,\mathcal{I})$ on $X$ of rank $k$, we can test whether there exists a multilinear monomial $m$ of degree $k$ in $P(X)$ such that the set of variables that appear in $m$ are independent in $M$ in $\OO^*(d2^k)$ time.
The idea is to design a linear matroid $M$ over $X$ and $P(X)$ will be a generating polynomial enumerating candidate subgraphs, that can be computed efficiently (e.g., determinant of a matrix) such that the monomials corresponding to the correct subgraphs are multilinear as well as the variable appearing in them are independent in $M$. In particular, \cite{EKW24}, show fast exact algorithms for finding long paths in undirected graphs using this technique. Their algorithm relies on constructing an intricate matroid $M$ of the determinantal sieving. The running time comes from an involved analysis, and gives the best running time. Though, we focus only on Hamiltonian paths and that too in bipartite (di)graph, our expositions are much simpler, yet we get the same best running time. 

Bj\"{o}rklund et al. \cite{BKK17}, gave the best running time exact algorithm for finding Hamiltonian cycles in directed bipartite graphs. They too use the matrix-tree theorem, for modular counting of Hamiltonian paths, and counting $k$-out-branchings. But, for detecting the Hamiltonian cycles, they modify the Laplacian matrix, so that over exponentially many determinant sums, only the monomials that correspond to Hamiltonian cycles (a single cycle that is a cycle cover), survive over a field of characteristic $2$ of suitable size. Their algorithm is inspired by a specific interpretation of the directed matrix-tree theorem that directly analysing the determinant of the reduced Laplacian. Their modification to the usual Laplacian, which they call quasi-Laplacian do the job, but become very complicated. Compared to their construction, ours is much simpler, and the proofs are straightforward, using the arguably simpler Cauchy-Binet theorem based proof of directed matrix-theorem. We look for spanning trees in the squared graph, and rely on the shape of the monomials corresponding to the Hamiltonian paths, and sieve them by summing over all substitutions of cube roots of unity.

\subparagraph{Matrix tree theorem in parametrized and exact algorithms} We recount that matrix tree theorem that has been used already in the context of parametrised and exact algorithms algorithms--most notably for designing fixed parameter tractable algorithms for connectivity problems, such as counting Hamiltonian paths, Steiner tree, connected feedback vertex sets, etc., by doing dynamic programming over tree decompositions. Bodlaender el al. employ the matrix tree theorem to keep track of connected objects that form the desired subgraphs, in the dynamic programming over tree decompositions~\cite{BCKN15}. Wlodarczyk further improved the algorithms using Clifford algebras~\cite{Wlodarczyk16}.

\subparagraph{Vertex variables vs edge variables} A notable difference between the existing algebraic algorithms for graph problems, e.g., detecting Hamiltonian paths or long paths (as well as $k$-paths)~\cite{BKK17,EKW24}, and our algorithms, is that we use vertex variables instead of edge variables. For example, in~\cite{EKW24} (determinantal sieving), for detecting long $st$-paths, edge variables are used. In~\cite{Bjorklund14,BKK17} too, edge variables are used, and the monomials in the generating polynomials are by design multilinear. They rely on cancellations in characteristic $2$ field, amongst monomials that corresponds to non-desirable subgraphs, and use PIT-lemma to check if in the final polynomial any good monomials remain. For sieving good monomials, we rely on homogeneity of all the monomials and good monomials being multi-$k$-adic (multi-quadratic, multi-cubic for $k=2,3$ respectively). 

Our construction don't necessarily fit in the determinantal sieving paradigm. The desirable monomials in our polynomials are not multilinear and it is not clear what would be the construction of the matroid $M$ over the variables. 

Also, applying the cone-size based monomial detection technique for detecting desirable monomials~\cite{FGS}, would give us a worse-off running time that we get by sieving via roots of unity.



	\subsection{Organization}
We provide a recipe for our approach by counting Hamiltonian paths in undirected graphs in~\cref{sec:matrixTree}, followed by illustrating the method for detecting Hamiltonian paths in bipartite graphs (in directed and undirected) in~\hyperref[app:undirBipHamP]{Appl. 1.1}, ~\hyperref[app:dirBipHamP]{Appl. 1.2}. We then generalize these to graphs with a given large independent set in the graph, in~\hyperref[app:undirIndHamP]{Appl. 1.1.1},~\hyperref[app:dirIndHamP]{Appl. 1.2.1}. In the next section,~\cref{sec:match}, we move onto counting matchings and covering problems. We start with counting perfect matchings in bipartite graphs, in~\hyperref[app:BipPMCount]{Appl. 2.1}, followed by maximum matching count in~\hyperref[app:MMCount]{Appl. 2.2}, and perfect $k$-star covering count in~\hyperref[app:kStarCount]{Appl. 2.3}. We also show how to count the number of $k$-matchings in bipartite graphs in~\hyperref[app:BipkMCount]{Appl. 2.4}. 

For completeness, we provide the preliminaries in~\cref{sec:prelims}. We provide formal theorems and their in~\cref{sec:thmproofs}. Finally, we conclude with~\cref{sec:concl}.        

	\section{Preliminaries}\label{sec:prelims}
\paragraph{Cauchy-Binet Theorem}
\begin{theorem}[Cauchy-Binet Theorem~\cite{Stanley}]
	Let $A=(a_{ij})$ be an $m\times n$ matrix, with
	$1\leq i\leq m$ and $1\leq j\leq n$. Let $B=(b_{ij})$ be an $n\times m$ matrix with $1\leq i \leq n$ and
	$1\leq j\leq m$. (Thus $AB$ is an $m\times m$ matrix.) If $m > n$, then $\det(AB)=0$. If $m\leq n$,
	then
	\[
		\det(AB)=
		\sum\limits_S(\det A[S])(\det B[S]),
	\]
	where $S$ ranges over all $m$-element subsets of $\{1,2,\ldots,n\}$
\end{theorem}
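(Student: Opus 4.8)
The plan is to establish both cases simultaneously by expanding $\det(AB)$ through the multilinearity and alternating properties of the determinant, viewed as a function of the columns of its argument. Throughout, let $a_1,\dots,a_n$ be the columns of $A$, and read $A[S]$ as the $m\times m$ matrix whose columns are those of $A$ indexed by $S$, and $B[S]$ as the $m\times m$ matrix whose rows are those of $B$ indexed by $S$. First I would observe that the $j$-th column of $AB$ is $A$ applied to the $j$-th column of $B$, namely the combination $\sum_{k=1}^{n} b_{k\,j}\,a_k$. Since the determinant is multilinear in these $m$ columns, expanding gives
\[
\det(AB)=\sum_{k_1=1}^{n}\cdots\sum_{k_m=1}^{n}\inparen{\prod_{j=1}^{m} b_{k_j\,j}}\det\inparen{a_{k_1},\dots,a_{k_m}},
\]
where the sum ranges over all maps $(k_1,\dots,k_m)\colon [m]\to\{1,\dots,n\}$.

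Next I would invoke the alternating property: $\det(a_{k_1},\dots,a_{k_m})$ vanishes whenever two indices coincide, so only injective tuples contribute. If $m>n$ there are no injective tuples, the sum is empty, and $\det(AB)=0$, which settles the first case at once. If $m\le n$, I would group the surviving tuples by their common image $S=\{s_1<\cdots<s_m\}$. Each injective tuple with image $S$ has the form $k_i=s_{\sigma(i)}$ for a unique permutation $\sigma\in S_m$, and reordering the columns into increasing order yields $\det(a_{k_1},\dots,a_{k_m})=\operatorname{sgn}(\sigma)\,\det A[S]$.

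Collecting all tuples with image $S$ and factoring out $\det A[S]$, the remaining per-subset sum is
\[
\sum_{\sigma\in S_m}\operatorname{sgn}(\sigma)\prod_{j=1}^{m} b_{s_{\sigma(j)}\,j},
\]
which is precisely the Leibniz expansion of $\det B[S]$, since the $(i,j)$ entry of $B[S]$ equals $b_{s_i,\,j}$. Summing over all $m$-element subsets $S\subseteq\{1,\dots,n\}$ then gives $\det(AB)=\sum_S \det A[S]\,\det B[S]$, as claimed.

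I expect the one delicate point to be the sign bookkeeping in the grouping step: one must verify that the permutation $\sigma$ reordering the columns $a_{s_{\sigma(i)}}$ into increasing order is the very same $\sigma$ carrying the sign in the Leibniz expansion of $\det B[S]$, so that the factor $\operatorname{sgn}(\sigma)$ is absorbed in exactly the right place; everything else is a mechanical use of multilinearity. Should one prefer to avoid the explicit permutation sum, an alternative is to compute $\det\inparen{\begin{smallmatrix} I_n & -B\\ A & 0\end{smallmatrix}}$ in two ways: a Schur-complement computation gives $\det(AB)$, whereas the generalized Laplace expansion of this $(n+m)\times(n+m)$ block matrix along its bottom $m$ rows produces $\sum_S \det A[S]\,\det B[S]$ up to signs one checks to cancel. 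This variant simply relocates the same sign analysis into the Laplace expansion.
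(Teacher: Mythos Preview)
Your proof is correct and is in fact one of the standard arguments for Cauchy--Binet: expand $\det(AB)$ by multilinearity in the columns, discard non-injective index tuples via the alternating property, group the survivors by their image $S$, and recognise the inner permutation sum as the Leibniz expansion of $\det B[S]$. The sign bookkeeping you flag is indeed the only point requiring care, and you have it right.

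However, the paper does not prove this statement at all. The Cauchy--Binet theorem appears in the Preliminaries section purely as a cited background result (attributed to Stanley), and is used as a black box in the proofs of the Matrix Tree Theorems in the appendix and in the analyses of the various applications. So there is nothing to compare your argument against: you have supplied a proof where the paper simply quotes the result.
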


\paragraph{Roots of unity}
For any $k$, there are $k$ complex roots of unity, viz., $1,\omega,\omega^2,\ldots,\omega^{k-1}$, that satisfy, $1+\omega+\omega^2+\ldots+\omega^{k-1} = 0$ where $w$ is a primitive $k$th root of unity. 
For any square matrix $M$ of size $n$ whose entries are from $\mathbb{Z}[\omega]$, $\det(M)$ can be computed in polynomial time. We can treat $\omega$ as a formal variable, and then compute the determinant of the matrix $M$ treating its entries as polynomials over $\omega$, in polynomial time using~\cite{AAM}. Finally, dividing the resulting determinant with $\omega^k-1$, that can be done in polynomial time using~\cite{HAB}, gives us the desired determinant of $M$ in $\mathbb{Z}[\omega]$.       
\paragraph{Matrix Tree Theorems}
For a graph $G$, any acyclic and connected subgraph of $G$ is called a \emph{spanning tree} of a $G$. For a directed graph $G=(V,E)$ and specific vertex $r$ of $G$, an \emph{in-arborescence} rooted at $r$, is a spanning tree of the underlying undirected graph, such that in the in-degree of each vertex in the tree is exactly one, except for $r$.
   
We now define incidence and in-incidence matrices of undirected and directed graphs respectively.
\begin{definition}[Incidence matrix]
	For any undirected graph $G=(V,E)$ with $n$ vertices and $m$ edges, its weighted incidence matrix $B(G)$ is a matrix of size $n\times m$ under an orientation $\mathfrak{o}$ of its edges is,
	\[
	B(G)[v,e] = 
	\left\{
	\begin{array}{ll}
		w(e) & \mbox{ if } e \mbox{ is an \emph{outgoing} edge from } v \\
		-w(e) & \mbox{ if } e \mbox{ is an \emph{incoming} edge at } v \\
		0 & \mbox{ otherwise} 
	\end{array}
	\right.
	\]
	where $w(e)$ is the weight of the edge $e$.
\end{definition}

\begin{definition}[In-incidence matrix]
	For any directed graph $G=(V,E)$, its in-incidence matrix, $N_{in}(G)$ is a rectangular matrix of size $n\times m$, with rows and columns indexed by vertices and edges of $G$. It is defined as follows.
	\[
	N_{{in}}(G)[v,e] = 
	\left\{
	\begin{array}{ll}
		1 & \mbox{ if } e \mbox{ is an \emph{incoming} edge at } v \\
		0 & \mbox{ otherwise} 
	\end{array}
	\right.
	\]
\end{definition}
For any undirected graph $G$, the \emph{spanning tree polynomial} of $G$, $\mbox{sp}(G)$ over $X=\{x_e|e\in E(G)\}$ is
\[
	\mbox{sp}(G) = \sum\limits_{T\in\mathcal{T}(G)}\prod\limits_{e\in T} x_e,
\]
where $\mathcal{T}(G)$ is the set of all spanning trees of $G$.
\begin{restatable}{theorem}{mttU}[Matrix Tree Theorem]\label{lem:mttU}
	Let $G$ be an undirected graph. Let $X=\{x_e|e\in E(G)\}$ be a set of formal variables. Let $B(G)$ be the weighted incidence matrix of $G$ under an arbitrary orientation $\mathfrak{o}$, such that any edge $e$ is given the weight $x_e$. Let $A_r$ be the matrix obtained from $B(G)$ after removing the row corresponding to a vertex $r\in V(G)$. Let $C$ be the unweighted (or equivalently, when all edge weights are $1$) incidence matrix of $G$ under the orientation $\mathfrak{o}$ (after the row corresponding to $r$ is removed). Then, 
	\[
		\mbox{sp}(G) = \det(A_rC^\top).
	\]    
\end{restatable}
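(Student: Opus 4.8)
The plan is to invoke the Cauchy--Binet theorem stated above, exactly as in the classical proof of the undirected matrix tree theorem. Since $A_r$ has size $(n-1)\times m$ and $C^\top$ has size $m\times(n-1)$, the product $A_rC^\top$ is a square $(n-1)\times(n-1)$ matrix, and Cauchy--Binet gives
\[
\det(A_rC^\top)=\sum_{S}\det(A_r[S])\,\det(C^\top[S]),
\]
where $S$ ranges over all $(n-1)$-element subsets of the columns, i.e. over all sets of $n-1$ edges of $G$. Here $A_r[S]$ denotes the square submatrix of $A_r$ formed by the columns in $S$, and $\det(C^\top[S])=\det(C[S])$.

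The first observation I would record is that, because each edge $e$ receives weight $x_e$, the $e$-th column of $A_r$ is exactly $x_e$ times the $e$-th column of $C$ (they share the same sign pattern coming from the orientation $\mathfrak{o}$). Pulling these scalars out of each column of $A_r[S]$ yields $\det(A_r[S])=\big(\prod_{e\in S}x_e\big)\det(C[S])$. Substituting this into the Cauchy--Binet sum gives
\[
\det(A_rC^\top)=\sum_{S}\Big(\prod_{e\in S}x_e\Big)\big(\det C[S]\big)^2,
\]
so the whole computation reduces to understanding the integer $\det C[S]$ for each candidate edge set $S$.

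The heart of the argument---and the step I expect to be the main obstacle---is the combinatorial lemma that $\det C[S]\in\{-1,0,+1\}$, with $\det C[S]=\pm1$ precisely when the $n-1$ edges of $S$ form a spanning tree of $G$, and $\det C[S]=0$ otherwise. I would prove this in two cases. If $S$ contains a cycle, then summing the signed incidence columns around that cycle (with the appropriate $\pm$ signs dictated by $\mathfrak{o}$) produces the zero vector even after the row $r$ has been deleted, so the columns of $C[S]$ are linearly dependent and $\det C[S]=0$. If $S$ is acyclic, then $n-1$ edges on $n$ vertices with no cycle must form a spanning tree; I would then argue $\det C[S]=\pm1$ by induction on $n$ using leaf removal: any tree on at least two vertices has at least two leaves, hence a leaf $v\neq r$ whose row in $C[S]$ contains a single nonzero entry $\pm1$. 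Laplace-expanding along that row peels off a factor of $\pm1$ and leaves the incidence matrix of the tree $S-v$ with row $r$ still deleted, to which the inductive hypothesis applies; the base case is the one-vertex tree with the empty determinant equal to $1$.

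Combining the two cases, $\big(\det C[S]\big)^2$ is the indicator that $S$ is a spanning tree, and the displayed sum collapses to $\sum_{T\in\mathcal{T}(G)}\prod_{e\in T}x_e=\mbox{sp}(G)$, which is the claimed identity. The only point requiring care is checking that deleting the row $r$ spoils neither case: in the cyclic case the dependence among columns is a linear relation that survives restriction to any subset of rows, and in the acyclic case one must always be able to choose the peeled-off leaf to be different from $r$, which is guaranteed because a tree on at least two vertices has two or more leaves.
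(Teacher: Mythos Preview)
Your proposal is correct and follows essentially the same route as the paper: apply Cauchy--Binet, then analyze $\det C[S]$ via a leaf-removal argument to show it is $\pm1$ exactly on spanning trees and $0$ otherwise. Your one-line observation that the $e$-th column of $A_r$ is $x_e$ times the $e$-th column of $C$, hence $\det(A_r[S])=(\prod_{e\in S}x_e)\det(C[S])$, is a mild streamlining of the paper's argument (which instead proves separately that $\det(A_r[S])=\sigma(S)\prod_e x_e$ and $\det(C[S])=\sigma(S)$ for the same sign $\sigma(S)$), and your cycle-based column dependence for the non-tree case is the dual of the paper's component-based row dependence; both are standard and equally valid.
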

For any directed graph $G=(V,E)$ and a vertex $r$ of $G$, the $r$-\emph{in-arborescence polynomial} of $G$, over $X=\{x_e|e\in E(G)\}$ is,
\[
	\mbox{Arb}_r(G) = \sum\limits_{S\in\mathcal{T}_r(G)}\prod\limits_{e\in S}x_e,
\]
where $\mathcal{T}_r(G)$ is the set of all $r$-rooted in-arborescences in $G$.

\begin{restatable}{theorem}{mttD}\label{thm:mttD}[Directed Matrix Tree Theorem]
	Let $G=(V,E)$ be a directed graph. Let $X=\{x_e|e\in E(G)\}$. Let $B(G)$ be the weighted incidence matrix of the undirected graph underlying $G$, under the same orientation as the direction of edges in $G$, such that the weight of any edge $e$ is $x_e$. Let $A_r$ be the matrix obtained from $B(G)$ by removing the row corresponding to $r$. Let $C$ be the in-incidence matrix of $G$, after the row corresponding to $r$ has been removed. Then,
	\[
		\mbox{Arb}_r(G) = \det(A_rC^\top).
	\]
\end{restatable}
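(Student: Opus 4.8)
The plan is to expand $\det(A_r C^\top)$ using the Cauchy--Binet Theorem and then match the surviving terms one-to-one with $r$-rooted in-arborescences, exactly mirroring the structure of the undirected proof (\cref{lem:mttU}) and isolating the single place where the directed case genuinely differs. Viewing $A_r$ and $C$ as $(n-1)\times m$ matrices whose columns are indexed by $E(G)$, Cauchy--Binet yields
\[
  \det(A_r C^\top)=\sum_{S}\det(A_r[S])\,\det(C^\top[S]),
\]
where $S$ ranges over the $(n-1)$-element subsets of $E(G)$ and $A_r[S]$, $C[S]$ are the square submatrices on the columns $S$. Since $\det(C^\top[S])=\det(C[S])$, everything reduces to understanding, for each candidate $S$ with $|S|=n-1$, the two scalar factors $\det(A_r[S])$ and $\det(C[S])$.

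First I would read off $\det(C[S])$. The matrix $C[S]$ has rows indexed by $V\setminus\{r\}$ and columns by $S$, with a single $1$ in each column $e$ sitting at the head of $e$ (and an all-zero column when that head is $r$). Hence $\det(C[S])\neq 0$ forces the head of each edge of $S$ to land on a distinct vertex of $V\setminus\{r\}$, i.e.\ every $v\neq r$ receives exactly one edge of $S$ and no edge of $S$ points into $r$; in that case $C[S]$ is a permutation matrix and $\det(C[S])=\pm 1$. Next, $\det(A_r[S])$ is precisely the quantity controlled by the undirected Matrix Tree Theorem: the reduced signed incidence submatrix on $n-1$ edges is nonzero iff $S$ is a spanning tree of the underlying undirected graph, and then $\det(A_r[S])=\pm\prod_{e\in S}x_e$.

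Combining the two factors, a term survives iff $S$ is at once a spanning tree and has in-degree exactly one at every non-root vertex and zero at $r$ -- which is exactly the definition of an $r$-rooted in-arborescence; for every other $S$ at least one factor vanishes. What remains is to pin down the sign of each surviving product, and this is the step I expect to be the main obstacle. In the undirected theorem the sign cancellation is automatic because there $C[S]$ is literally $A_r[S]$ with all weights set to $1$, so the two $\pm$ signs coincide and their product is $+1$; here $C$ is the $0/1$ in-incidence matrix, so the signs of $\det(A_r[S])$ and $\det(C[S])$ must be reconciled by hand.

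The cleanest route I see is to exploit that the product $\det(A_r[S])\det(C[S])$ is invariant under simultaneously reindexing the rows by any vertex permutation and the columns by any edge permutation, since each determinant then acquires the same sign factor and the two cancel. For a fixed arborescence $S$ I would order $V\setminus\{r\}$ by a topological order pointing away from $r$ (each vertex listed after its parent) and order the columns by the unique in-edge $e_v$ of each $v$. Under this joint ordering $C[S]$ becomes the identity, while $A_r[S]$ becomes triangular, with its diagonal entries given by the head-contributions of the edges $e_v$; thus $\det(A_r[S])$ collapses to the product of those diagonal entries, which I can compare directly against the claimed $\prod_{e\in S}x_e$. Carrying out this comparison identifies the resulting coefficient and finishes the proof, and keeping the global sign consistent with the paper's orientation convention is the one place that requires real care.
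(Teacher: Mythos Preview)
Your proposal is correct and follows essentially the same route as the paper: expand via Cauchy--Binet, argue that $\det(A_r[S])$ vanishes unless $S$ is a spanning tree and that $\det(C[S])$ vanishes unless $C[S]$ is a permutation matrix (i.e.\ every non-root vertex has in-degree exactly one), and then fix the sign by a simultaneous row/column reordering that triangularizes both submatrices. The only cosmetic difference is that the paper reuses the leaf-removal ordering from the undirected proof, whereas you use a topological order from the root so that $C[S]$ becomes the identity; these are equivalent devices.
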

The proofs of both the directed and undirected matrix tree theorems are provided in \cref{app:a} for completeness.
\paragraph{Isolation lemma}
\begin{lemma}[Isolation lemma~\cite{MVV}]
	\label{lem:isolate}
	Let $S$ be a finite set, $S=\{z_1,z_2,\ldots,z_n\}$, and let $\mathcal{F}$ be a family of subsets of $S$, i.e., $\mathcal{F}\subseteq2^S$. When the elements of $S$ are assigned integer weights uniformly and independently at random from $[4n]$,
	\[
		\Pr[\mbox{There is a unique minimum weight set in }F]\geq \frac{3}{4},
	\] 
	where the weight of a set is the sum of the weights of the elements in the set.
\end{lemma}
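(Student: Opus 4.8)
The plan is to prove the lemma via a union bound over \emph{ambiguous} elements, using the principle of deferred decisions. Call an element $z_i \in S$ \emph{ambiguous} (with respect to a given weight assignment $w$) if there exist two minimum-weight sets in $\mathcal{F}$, one containing $z_i$ and one avoiding $z_i$. The first observation is that whenever $\mathcal{F}$ fails to have a unique minimum-weight set, some element must be ambiguous: if $A, B \in \mathcal{F}$ are two distinct sets both achieving the minimum weight, then $A \triangle B \neq \emptyset$, and any $z_i \in A \triangle B$ witnesses ambiguity, since it lies in exactly one of the two minimum-weight sets $A$ and $B$. Hence it suffices to show $\Pr[\exists\, i : z_i \text{ is ambiguous}] \leq 1/4$.

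Next I would bound the probability that a \emph{fixed} element $z_i$ is ambiguous, deferring the choice of $w(z_i)$. Condition on (equivalently, freeze arbitrarily) all the weights $w(z_j)$ for $j \neq i$, leaving only $w(z_i)$ random, and define
\[
  \alpha_i = \min_{\substack{T \in \mathcal{F} \\ z_i \notin T}} w(T), \qquad
  \beta_i = \min_{\substack{T \in \mathcal{F} \\ z_i \in T}} \bigl(w(T) - w(z_i)\bigr),
\]
with the convention that an empty minimum equals $+\infty$. Crucially, both $\alpha_i$ and $\beta_i$ are determined entirely by the frozen weights $\{w(z_j) : j \neq i\}$ and do not depend on $w(z_i)$. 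Among the sets containing $z_i$ the smallest weight is $\beta_i + w(z_i)$, while among the sets avoiding $z_i$ it is $\alpha_i$; by definition $z_i$ is ambiguous exactly when these two minima coincide, i.e. when $w(z_i) = \alpha_i - \beta_i$. Since $\alpha_i - \beta_i$ is a fixed value once the other weights are frozen, and $w(z_i)$ is drawn uniformly from the $4n$ values in $[4n]$, we obtain $\Pr[z_i \text{ ambiguous}] \leq 1/(4n)$; this holds conditionally for every fixing of the remaining weights, hence unconditionally.

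Finally, the union bound over the $n$ elements gives $\Pr[\exists\, i : z_i \text{ ambiguous}] \leq n \cdot 1/(4n) = 1/4$, so with probability at least $3/4$ no element is ambiguous, and therefore the minimum-weight set in $\mathcal{F}$ is unique. The one step that needs genuine care — and which I regard as the crux of the argument — is the claim that the threshold $\alpha_i - \beta_i$ is independent of $w(z_i)$: this independence is precisely what forces a single random coordinate to land on a prescribed value with probability at most $1/(4n)$, and it is the reason the one-dimensional randomness of $w(z_i)$ suffices. The degenerate cases (no set of $\mathcal{F}$ contains $z_i$, every set contains $z_i$, or $\mathcal{F} = \emptyset$) only send one of $\alpha_i, \beta_i$ to $+\infty$ and thereby make ambiguity impossible, so they do not affect the bound.
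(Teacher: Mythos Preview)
Your proof is correct and is precisely the standard argument due to Mulmuley--Vazirani--Vazirani: define ambiguous elements, use deferred decisions to show each is ambiguous with probability at most $1/(4n)$, and finish with a union bound. The paper itself does not supply a proof of this lemma at all --- it is stated in the preliminaries with a citation to \cite{MVV} and used as a black box --- so there is nothing to compare against beyond noting that your write-up matches the original proof.
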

\paragraph*{Matchings}
Let $G=(V,E)$ be an undirected graph on $n$ vertices. A matching $M$ is a subset of edges $E$ of $G$ such that no two edges in $M$ share a common endpoint. A matching $M$ of $G$ is a maximum matching of $G$ if for all matchings $M'$ of $G$, we have $|M|\ge|M'|$. Moreover, $M$ is said to be a perfect matching of $G$ if $|M|=n/2$. Edmonds alogirithm \cite{Edmonds} is a polynomial time algorithm to find a maximum matching.. Denote by $\nu=\nu(G)$ to be the size of the maximum matching in $G$. 
\paragraph*{Gallai-Edmonds Decomposition \cite{LP}}
Let $G=(V,E)$ be an undirected graph. Then, $V(G)$ can be partitioned into three sets $A(G),C(G),D(G)$ as follows: $D$ is the set of vertices that are left unmatched in at least one maximum matching of $G$, $A=\left(V\setminus D\right)\cap N_G(D)$ where $N_G(D)$ is the set of neighbors of $D$, and $C=V\setminus\left(D\cup C \right)$. Then the following holds:
\begin{enumerate}
	\item\label{item:1} The components of $D(G)$ are factor-critical graphs, i.e., each component has an odd number of vertices, and when any one of these vertices is removed, there is a perfect matching of the remaining vertices. 
	\item\label{item:2} The subgraph induced by $C(G)$ has a perfect matching.
	\item\label{item:4} Every maximum matching in $G$ has the following structure: it consists of a near-perfect matching of each component of $D(G)$, a perfect matching of $C(G)$, and edges from all vertices in $A(G)$ to distinct components of $D(G)$.
	\item\label{item:5} If $D(G)$ has $k$ components, then we have $\nu(G)=\frac{1}{2}(|V(G)|-k+|A(G)|)$. In particular, if the components of $D$ are $\{D_i\}_{i\in[k]}$, then $\nu=\frac C 2+A+\sum_i\frac{|D_i-1|}2$.
\end{enumerate}
We know that the Gallai-Edmonds Decomposition of a graph $G$ can be computed in $\poly(n)$ time using Edmonds' algorithm \cite{Edmonds}.

\paragraph{Notations}
We denote $[n]$ to be the set $\{1,2,\ldots,n\}$. Given a polynomial $p(X)$ over $X$ variables, and a monomial $m$ over $X$, $[m](p(X))$ denotes the coefficient of the monomial $m$ in $p(X)$.

	\section{Recipe and applications}\label{sec:matrixTree}

\noindent\customlabel{app:undirHamP}{Application 0.1: Counting Hamiltonian Paths in undirected graphs}

Given an undirected unweighted graph $G=(V,E)$ and two of its vertices $s$ and $t$, we are interested in counting the number of $st$-Hamiltonian paths in $G$.
Since, Hamiltonian paths are also spanning trees, the set of all spanning trees of $G$ contains all the Hamiltonian paths in $G$, as well.   
Let $A_s$ and $C$ be the weighted (weight of edge $e$ is $x_e$) and unweighted incidence matrices of $G$ respectively (with row corresponding to $s$ removed). Let's call the weighted version of $G$ as $H$. 

From matrix-tree theorem, we know that $\det(A_sC^\top)$ gives the spanning tree polynomial of $G$ in edge variables . That is,
\begin{align}
	\label{eq:spHam}
	\det(A_sC^\top) =  \sum_{T\in\mathcal{T}(G)}\prod_{e\in T} x_e
\end{align}    
where $\mathcal{T}(G)$ is the set of all  spanning trees of $G$. 

Substituting $x_ux_v$ in place of $x_e$ for each edge $e=\{u,v\}$ in~\cref{eq:spHam}, we can write
\begin{align}
	\label{eq:spvHam}
	\det(A_sC^\top) = \sum_{T\in\mathcal{T}(G)}\prod_{v\in V}x^{\deg_T{(v)}}_v.
\end{align}
Here, $\deg_T(v)$ is the degree of the vertex $v$ in $T$. 

Notice that for any $st$-Hamiltonian path $T$, its contribution in the sum in the RHS of the above equation is exactly $x_sx_t\prod_{v\in V\setminus\{s,t\}}x^2_v$. This is because the degree of each internal vertex on a Hamiltonian path is exactly $2$, and the endpoints of the path, $s$ and $t$ have degree $1$ each. 
Since for every $st$-Hamiltonian path, we have a contribution of $x_sx_t\prod\limits_{v\in V\setminus\{s,t\}}x^2_v$, the coefficient of $x_sx_t\prod\limits_{v\in V\setminus\{s,t\}}x^2_v$ in the polynomial $\sum_{T\in\mathcal{T}(G)}\prod\limits_{v\in V}x^{\deg_T{(v)}}_v$ gives us the number of $st$-Hamiltonian paths in $G$. 

Multiplying $x_sx_t$ on both sides of the~\cref{eq:spvHam} has the effect that the monomials corresponding to Hamiltonian paths become homogeneous, i.e., $\prod_{v\in V}x^2_v$.

Thus, we have that
\begin{align}
	\label{eq:ham}
	\#\mbox{HamPaths}(G,s,t) = [\Pi_{v\in V}x^2_v]\left(x_sx_t\det(A_sC^\top)\right).
\end{align}

In the following, we will see a way to compute the coefficient of $\Pi_{v\in V}x^2_v$ in $x_sx_t\det(A_sC^\top)$.

We claim that
\begin{align}
	\label{eq:mnml}
	\sum_{\bar{x}\in\{-1,1\}^n}x_sx_{t}\det\left(A_sC^\top\right) = 2^n\cdot [\Pi_{v\in V}x^2_v]\left(x_sx_t\det(A_sC^\top)\right), 
\end{align}
where $\bar{x}$ is the $n$-tuple consisting of variables for each vertex of $G$ (arranged in lexicographic order).

To see this, note that any monomial in $x_sx_t\det(A_sC^\top)$ that corresponds to non-Hamiltonian spanning tree has at least one leaf vertex, say $u$, that is distinct from both $s$ and $t$, and thus in the monomial degree of $x_u$ is one. Clearly, such monomials will get cancelled in the summation over all substitution of vertex variables from $\{-1,1\}$. On the other hand, monomials that have all variables appearing with even degree will contribute $1$ for each substitution. Moreover, the only monomials with degree of all variables even in $x_sx_t\det(A_sC^\top)$, are precisely those that correspond to $st$-Hamiltonian paths, viz. $\Pi_{v\in V}x^2_v$. Thus, summing over all $2^n$ substitutions of vertex variables, we obtain~\cref{eq:mnml}.

Finally, from~\cref{eq:ham,eq:mnml}, we have
\begin{align}
	\label{eq:uhamcount}
	\#\mathrm{HamPaths}(G,s,t)=\frac{1}{2^n}\sum_{\bar{x}\in\{-1,1\}^n}x_sx_{t}\det\left(A_sC^\top\right).
\end{align}

The~\cref{eq:uhamcount} gives a $\OO^*(2^n)$ time and polynomial space algorithm in a straightforward manner--for each substitution of the vertex variables from $\{-1,1\}$, we can compute $x_sx_t\det(A_sC^\top)$ in polynomial time, and maintain an accumulating sum as we iterate over all substitutions. 

To summarize, we have shown the following.
\begin{theorem}[\cite{Barvinok}]
	There exists an algorithm, that given an undirected graph $G=(V,E)$ on $n$ vertices, counts the number of Hamiltonian paths in $G$ in time $\OO^*(2^n)$ and polynomial space. 
\end{theorem}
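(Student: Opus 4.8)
The plan is to reduce the unrestricted count to the fixed-endpoint count already established in Application~0.1. Every Hamiltonian path has exactly two endpoints, so it is an $st$-Hamiltonian path for a \emph{unique} unordered pair $\{s,t\}$ of vertices, and the coefficient extracted in~\cref{eq:spvHam} is symmetric in $s$ and $t$. Hence
\[
\#\mathrm{HamPaths}(G) = \sum_{\{s,t\}} \#\mathrm{HamPaths}(G,s,t),
\]
the sum ranging over the $\binom{n}{2}$ unordered pairs. Each summand is computed by~\cref{eq:uhamcount} in $\OO^*(2^n)$ time and polynomial space, and there are only $O(n^2)$ of them, so the total running time remains $\OO^*(2^n)$ and the space stays polynomial, since we need only accumulate a single running sum.

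A slightly cleaner variant folds all endpoint pairs into one sieve, and this is the form I would ultimately present. Fix an arbitrary root $r$; by the Matrix Tree Theorem (\cref{lem:mttU}) the value $\det(A_rC^\top)=\mathrm{sp}(G)$ is independent of $r$, and under the substitution $x_e=x_ux_v$ it equals $\sum_T\prod_v x_v^{\deg_T(v)}$ as in~\cref{eq:spvHam}. I would then consider
\[
\frac{1}{2^n}\sum_{\bar x\in\{-1,1\}^n}\left(\sum_{s<t} x_s x_t\right)\det(A_rC^\top),
\]
and argue, exactly as in the derivation of~\cref{eq:mnml}, that the $\pm1$ summation retains precisely the monomials all of whose exponents are even. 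Multiplying $\prod_v x_v^{\deg_T(v)}$ by $x_sx_t$ makes all exponents even exactly when $\deg_T(s)$ and $\deg_T(t)$ are odd and every other degree is even. Note that for each substitution the scalar $\sum_{s<t}x_sx_t=\tfrac12\big((\sum_s x_s)^2-n\big)$ is computed in polynomial time, so the per-substitution cost is dominated by the single determinant evaluation.

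The one place where the argument is not pure bookkeeping—and hence the main obstacle—is showing that this \emph{parity} condition forces $T$ to be a Hamiltonian path rather than some other spanning tree with the same parity profile. Here I would invoke the handshake identity for a spanning tree, $\sum_v\deg_T(v)=2(n-1)$: writing the two odd degrees as $2\alpha+1,\,2\beta+1$ and the remaining even degrees as $2\gamma_v\ge2$, the identity collapses to $\alpha+\beta+\sum_{v\neq s,t}\gamma_v=n-2$ with each $\gamma_v\ge1$, which forces $\alpha=\beta=0$ and every $\gamma_v=1$. Thus $s,t$ have degree $1$ and all other vertices degree $2$, i.e.\ $T$ is exactly an $st$-Hamiltonian path, and each such path is counted once for its unique endpoint pair $\{s,t\}$. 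This establishes the single-determinant-sum formula, again evaluable in $\OO^*(2^n)$ time and polynomial space, completing the proof.
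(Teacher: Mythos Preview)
Your proposal is correct and follows the paper's approach essentially verbatim: the Matrix Tree Theorem with the vertex-variable substitution $x_e=x_ux_v$, followed by the $\{-1,1\}^n$ parity sieve, is exactly Application~0.1. Your only additions are the (trivial) outer sum over the $\binom{n}{2}$ endpoint pairs---which the paper's theorem statement requires but its derivation leaves implicit---and a cleaner handshake-lemma justification of why ``all exponents even after multiplying by $x_sx_t$'' forces the tree to be an $st$-path, which the paper asserts more tersely via the existence of a third leaf.
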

     
We distil all of the above into a recipe that we use for the rest of the problems, where we count (or detect) the number of subgraphs of a specific kind (depending on the problem) in the input graph.

\paragraph{A recipe} 
To compute the number of $st$-Hamiltonian paths in an undirected graph $G$, do the following.   
\subparagraph{Construct a graph $H$.}
To construct $H$, assign weights to the edges of $G$ as follows. Let $X=\{x_v\mid v\in V(G)\}$ be the set of vertex variables. Set the weight of any edge $\{u,v\}$ to be $x_ux_v$. Let $A_s$ be the weighted incidence matrix of $H$ under an arbitrary orientation, with the row corresponding to vertex $s$ removed.   

\subparagraph{Construction of $C$.}
The matrix $C$ is the unweighted incidence matrix of $H$ under the same orientation as in $A_s$, with the row corresponding to $s$ removed.

\subparagraph{Computation.}
The number of $st$-Hamiltonian Paths in $G$ can be computed as follows:
\begin{align*}
\#\mbox{HamPaths}(G,s,t)=\frac{1}{2^n}\sum_{\bar{x}\in\{-1,1\}^n}x_sx_{t}\det\left(A_sC^\top\right)
\end{align*}

We can count the number of Hamiltonian paths in directed graphs as well, using almost the same construction as for undirected graphs; see~\hyperref[app:dirHamP]{Appl. 0.2}.\\
\linebreak
\noindent\customlabel{app:undirBipHamP}{Application 1.1: Detecting Hamiltonian Paths in bipartite undirected graphs}   
Given an undirected bipartite graph $G=(V_1\sqcup V_2,E)$ on $2n$ vertices and two specified vertices $s\in V_1$ and $t\in V_2$, we are interested in deciding whether there is a Hamiltonian path in $G$ from $s$ to $t$. Let $|V_1|=|V_2|=n$.
\begin{figure}
	\centering
	\begin{subfigure}[c]{0.4\textwidth}
		\centering
		\includegraphics[width=0.8\textwidth]{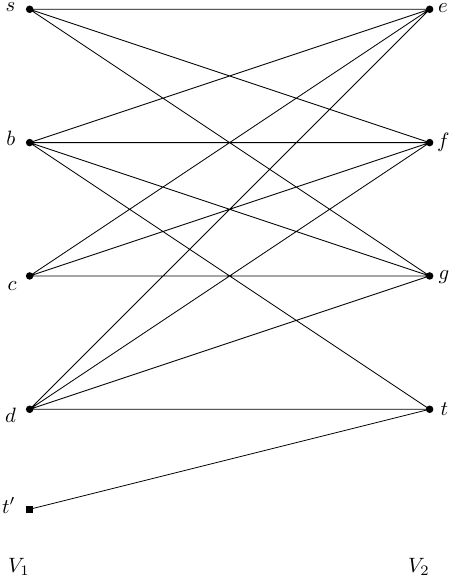}
		\caption{ }
		\label{fig:buhampath}	
	\end{subfigure}
	\begin{subfigure}[c]{0.4\textwidth}
		\centering
		\includegraphics[width=0.8\textwidth]{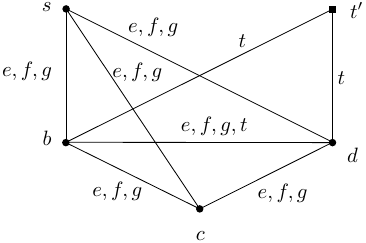}
		\caption{ }
		\label{fig:busqhampath}	
	\end{subfigure}
	\caption{(a) A bipartite undirected graph $G=(V_1\sqcup V_2,E)$(with extra vertex $t'$ attached to $t$). (b) Square graph of $G$, restricted to $V_1$.} 
	\label{fig:BUHamPath}
\end{figure}

\subparagraph{Construction of $H$.}
First, introduce a new vertex $t'$ in $G$ and attach it to $t$. The modified graph is still bipartite, with the parts being $V_1\cup\{t'\}$ and $V_2$. Now the graph $H$ is constructed from the modified $G$ by squaring $G$ as follows. The vertex set $V(H)$ is just $V_1\cup\{t'\}$ and there is an edge between $u$ and $v$ in $H$, labelled $w$, for any $w\in V_2$ such that $uwv$ is a valid path in $G$. Notice that there can be multi-edges in $H$, though they must have distinct labels. See~\cref{fig:BUHamPath} for an example.
Let $X$ be the set of vertex variables, $X=\{x_u|u\in V(H)\}$. Let $z$ be another formal variable.

The weight of an edge $\{u,v\}$ that is labelled $w$ is set to be $z^{W(uwv)}x_ux_v$, where $W(uwv)$ is chosen independently and uniformly at random from $[4|E(H)|]$.
 
\subparagraph{Construction of $C$.}
The matrix $C$ is a $0$-$1$ valued matrix, and its rows are indexed by the vertices in $V_2$. The only non-zero entry in a column indexed by an edge $\{u,v\}$ that is labelled $w$ is in the row $w$. 
\subparagraph{Computation.}
If there exists a Hamiltonian path from $s$ to $t$ in $G$, then the following holds with high probability:
\begin{align}
	\label{eq:buham}
	\sum_{\bar{x}\in\{-1,1\}^{n+1}}x_sx_{t'}\det\left(A_sC^\top\right) \neq0,
\end{align}
where $\bar{x}$ is the $(n+1)$-tuple of distinct vertex variables from $X$, arranged lexicographically. Otherwise, if there does not exist any $s$ to $t$ Hamiltonian path in $G$, then the LHS in~\cref{eq:buham} is always zero.   
\subparagraph{Result.}
The above effectively gives us a randomized algorithm (one sided error) for detecting Hamiltonian paths in undirected bipartite graphs on $n$ vertices, that runs in $\OO^*(2^{n/2})$ time and polynomial space. See~\cref{thm:buham} for a proof.
\subparagraph{Why it works.} The idea is that the edges of any $st$-Hamiltonian path in $G$ can be partitioned into length-two paths that start and end in $V_1$, and one edge that is incident on $t$. The edges in $H$, by construction, correspond to such length-two paths in $G$, with the label of the edge being the middle vertex of the length-two path. Thus an $st'$-Hamiltonian path in $H$, in which all the vertices of $V_2$ appear as labels of the edges, corresponds to an $st$-Hamiltonian paths in $G$. For example, the $st'$-Hamiltonian path in $H$ shown in~\cref{fig:glhampath}, all the vertices of $V_2$ appear as labels, and thus it corresponds to an $st$-Hamiltonian path in $G$. On the other hand, in~\cref{fig:blhampath}, in the highlighted $st'$-Hamiltonian path, label $f$ is missing, and hence it does not correspond to an $st$-Hamiltonian path in $G$. 

The construction of $C$ and the summation over all substitution for $X$ variables from $\{-1,1\}$ ensure that non-$st'$-Hamiltonian spanning trees, as well as `badly' labelled $st'$-Hamiltonian paths, do not contribute to the LHS of~\cref{eq:buham}. Still, distinct `good' labelled $st'$-Hamiltonian paths might cancel each other out. The randomly chosen exponents of $z$ in the edge weights, ensure that at least one survives the summation with high probability by isolation lemma. See~\cref{thm:buham} for a full proof. 
\linebreak
\begin{figure}
	\centering
	\begin{subfigure}[c]{0.3\textwidth}
		\centering
		\includegraphics[width=0.8\textwidth]{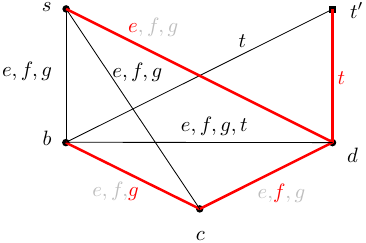}
		\caption{ }
		\label{fig:badhampath}	
	\end{subfigure}
	\begin{subfigure}[c]{0.3\textwidth}
		\centering
		\includegraphics[width=0.8\textwidth]{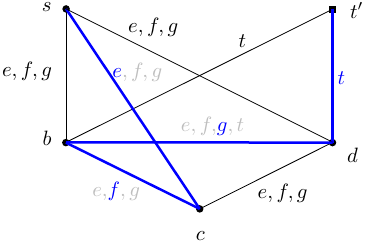}
		\caption{ }
		\label{fig:glhampath}	
	\end{subfigure}
	\begin{subfigure}[c]{0.3\textwidth}
		\centering
		\includegraphics[width=0.8\textwidth]{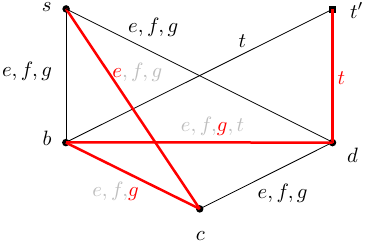}
		\caption{ }
		\label{fig:blhampath}
	\end{subfigure}
	\caption{Spanning trees of the squared graph from~\cref{fig:BUHamPath}. (a) A spanning tree that is not Hamiltonian, with weight $x_s x_b x_c x^3_d {x}_{t'}$. (b) A Hamiltonian spanning tree, where all the $V_2$ labels appear on some edge, with weight $x_s x^2_b x^2_c x^2_d {x}_{t'}$.  (c) A Hamiltonian spanning tree, with weight $x_s x^2_b x^2_c x^2_d {x}_{t'}$, but in which some $V_2$ labels are not covered.} 
	\label{fig:hampathtrees}
\end{figure}

\noindent\customlabel{app:undirIndHamP}{Application 1.1.1: Detecting Hamiltonian paths in undirected graphs with large independent set}

Given an undirected graph $G$ as input, and a maximum independent set $V_0$ of $G$ of size $\alpha(G)$, we can correctly detect whether there exists a Hamiltonian path in $G$ with high probability. We assume that both $s$ and $t$ are in $V\setminus V_0$. We need to slightly modify the construction used in the bipartite undirected graph case as follows.

\subparagraph{Construction of $H$.} The graph $H$ has $V\setminus V_0$ as its vertex set. For each $uwv$ path in $G$, where $w\in V_0$, there is an edge between $u$ and $v$ in $H$, labelled $w$. Also, for each edge $\{u,v\}$ in $G[V\setminus V_0]$, there are $q=|V(H)|-|V_0|-1$ parallel edges, each labelled with a distinct labels from $[q]$. The weight of any edge $\{u,v\}$ in $H$, that is labelled $\ell\in V_0\cup[q]$, is set to be $z^{W(u\ell v)}x_ux_v$. Here $W:E(H)\rightarrow[4|E(H)|]$ is a random weight function on the labelled edges of $H$.
\subparagraph{Construction of $C$.} The first $|V_0|$ many rows of the matrix $C$ are indexed by vertices in $V_0$, and the remaining rows are indexed by numbers in $[q]$. The non-zero entry in the column of $C$ corresponding to a $uv$ edge that is labelled with a vertex $w$ in $V_0$ appears in the row indexed by $w$. For an edge $uv$ labelled with $k\in[q]$, the corresponding column in $C$ has non-zero entry only in row indexed by $k$. Note that all the non-zero entries of $C$ are $1$. 
\subparagraph{Computation.}
The following is true with high probability if there exists an $st$-Hamiltonian path in $G$,
\begin{align}
\label{eq:uiham}
\sum_{\bar{x}\in\{-1,1\}^{|n-\alpha(G)|}}x_sx_t\left(\det(A_sC^\top)\right)~~\neq0,
\end{align}
where $\alpha(G)$ is the size of any maximum independent set of $G$. When there does not exist any $s$ to $t$ Hamiltonian path in $G$, the LHS in~\cref{eq:uiham} is always zero.   
\subparagraph{Result.}
We can correctly detect whether there exists a Hamiltonian path in $G$ on $n$ vertices with high probability (one sided error), given a maximum independent set $V_0$ of $G$, in $\OO^*(2^{n-\alpha(G)})$ time and polynomial space. 
\subparagraph{Why it works.} The idea is that the edges of any $st$-Hamiltonian path in $G$ can be partitioned into length-two paths that start and end in $V\setminus V_0$ with the middle vertex in $V_0$, and into edges with both endpoints in $V\setminus V_0$. The edges in $H$, by construction correspond to such two length paths in $G$, with the label of the edge being the middle vertex of the length-two path, and also to edges in $G$ that have both endpoints in $V\setminus V_0$, each equipped with labels from $[q]$. Thus, an $st$-Hamiltonian path in $H$ in which all the vertices of $V_0$ appear as labels of the edges in the Hamiltonian path corresponds to an $st$-Hamiltonian paths in $G$. The labels from $[q]$ for the copies of edges in $G[V\setminus V_0]$ are there to match the sizes of the matrices $A_s$ and $C$, so that Cauchy-Binet theorem is applicable. 

The construction of $C$ and the summation over all substitution for $X$-variables from $\{-1,1\}$ ensure that non $st$-Hamiltonian path spanning trees, as well as `badly' labelled $st$-Hamiltonian paths, do not contribute to the LHS of~\cref{eq:uiham}. Notice that in any `good' labelled Hamiltonian path of $H$, there are $q$ edges with both endpoints in $V\setminus V_0$, and thus the $q$ distinct labels can cover them perfectly, so that the corresponding set of columns in $C$ is independent.  Still, distinct `good' labelled $st'$-Hamiltonian paths might cancel with each other. The randomly chosen exponents of $z$ in the edge weights ensure that at least one survives the summation with high probability via the isolation lemma,~\cref{lem:isolate}.\\ 

\noindent\customlabel{app:dirBipHamP}{Application 1.2: Detecting Hamiltonian paths in directed bipartite graphs}

As input, we have a directed bipartite graph $G=(V_1\sqcup V_2,E)$, and two specified vertices $s\in V_1$ and $t\in V_2$, such that $|V_1|=|V_2|=n$. We are interested in deciding whether there is a directed Hamiltonian path from $s$ to $t$ in $G$.

The application of the recipe is almost the same as in the undirected bipartite case earlier. We explain only the modifications from the undirected case in the following. The construction of $H$ is the same as before, except that $H$ is a directed graph and there is an edge $(u,v)$ in $H$ labelled $w$ if there is a directed path $uwv$ in $G$. The weight assigned to this edge is $z^{W(uwv)}x_ux^2_v$, where $W:E(H)\rightarrow[4|E(H)|]$ is a random weight function.

The construction of $C$ is the same as before.
In the computation step, we replace the product $x_sx_t$ appearing before the determinant in~\cref{eq:buham} with $x^2_sx_{t'}$, and the order of the roots of unity to be $3$ instead of $2$.
More specifically,
\[
	\sum_{\bar{x}\in\{1,\omega,\omega^2\}^{n+1}}x^2_sx_{t'}\det\left(A_sC^\top\right) \neq 0,
\]
with high probability, if there exists a directed $s$ to $t$ Hamiltonian path in $G$. If there does not exist any directed $s$ to $t$ Hamiltonian path, then the LHS of the above equation evaluates to zero with probability $1$.

Thus, we get a randomized algorithm (one sided error) that correctly detects whether there is a Hamiltonian path in a directed bipartite graph on $n$ vertices, with high probability and runs in time $\OO^*(3^{n/2})$ time and uses polynomial space. See~\cref{thm:bdham} for a proof.\\

\noindent\customlabel{app:dirIndHamP}{Application 1.2.1: A $\OO^*(3^{n-\alpha(G)})$ time algorithm for detecting Hamiltonian path in a directed graph.}

Given a directed graph $G$ as input, and a maximum independent set $V_0$ of $G$ of size $\alpha(G)$, we can detect correctly whether there exists a Hamiltonian path in $G$ with high probability. We assume that both $s$ and $t$ are in $V\setminus V_0$. Again, the construction is very similar to the bipartite case.

\subparagraph{Construction of $H$.} The graph $H$ has $V\setminus V_0$ as its vertex set. For each $uwv$ path in $G$ (with $w\in V_0$) there is an edge from $u$ to $v$ in $H$, labelled $w$,  that is assigned the weight $z^{W(uwv)}x_ux^2_v$. Also, for each edge $uv$ of $G$ such that $u$ and $v$ both lie in $V\setminus V_0$ there are $q = |V(H)|-|V_0|-1$ parallel edges in $H$ from $u$ to $v$, each labelled with a distinct label from $[q]$. The weight of any edge $uv$ in $H$, that is labelled $\ell$ ($\ell\in V_0\cup[q]$) is set to be $z^{W(u\ell v)}x_ux^2_v$. Here $W:E(H)\rightarrow[4|E(H)|]$ is a random weight function on the labelled edges of $H$.
\subparagraph{Construction of $C$.} The first $|V_0|$ rows of the matrix $C$ are indexed by the vertices in $V_0$, and the remaining rows are indexed by numbers in $[q]$. The non-zero entry in the column of $C$ corresponding to an edge $uv$ that is labelled with a vertex $w\in V_0$ is in the row indexed by $w$. For an edge $uv$ labelled with $k\in[q]$, the corresponding column of $C$ has a non-zero entry only in the row indexed by $k$. Note that all the non-zero entries of $C$ are $1$.
\subparagraph{Computation.}
\[
\sum_{\bar{x}\in\{1,\omega,\omega^2\}^{|n-\alpha(G)|}}x^2_sx_t\left(\det(A_sC^\top)\right) \neq 0
\]
with high probability, if there exists a directed $s$ to $t$ Hamiltonian path in $G$. Otherwise, if there does not exist any directed $s$ to $t$ Hamiltonian path in $G$, then LHS in the above equation always evaluates to zero.

\subparagraph{Result.}
Given a directed graph $G$ on $n$ vertices, and an independent set $V_0$ of $G$, we can correctly detect with high probability (one sided error), whether there exists a directed Hamiltonian path in $G$ with high probability, in time $\OO^*(3^{n-\alpha(G)})$ and polynomial space.  

\subparagraph{Why it works.} The idea here is similar to~\hyperref[app:undirIndHamP]{Appl. 1.1.1}. The construction of $C$ is also the same. We can ensure that $\det(A_r[S]C^\top[S])$ is non-zero only for spanning trees $S$ of $H$; that is, $S$ need not be an in-arborescence, since we have traded the in-incidence matrix of $H$ with $C$, for saving variables on the $V_0$ vertices. So, the monomials that appear in $x^2_sx_t\det(A_r[S]C^\top[S])$ (before cancellations), correspond to spanning trees, and not arborescences of $H$. But for each edge $uv$ the factor of $x_ux^2_v$ in its weight, ensures that only the monomials corresponding to directed $st$-Hamiltonian paths in $H$ have $\prod_{v\in V(H)}x^3_v$. And for all the other spanning trees, the monomial corresponding to them in $x^2_sx_t\det(A_r[S]C^\top[S])$ will have at least one variable appearing with degree not divisible by $3$. Because, if the spanning tree is not a Hamiltonian path, then there must be a leaf vertex that is distinct from both $s$ and $t$, and thus the corresponding variable appears with degree either $2$ or $1$. Moreover, even if the spanning tree is a Hamiltonian path but not a directed one, then either there is a vertex variable that appears with degree $4$ or $2$ in the corresponding monomial. Such monomials get cancelled in the summation over all cube-root-of-unity substitution for vertex variables, similar to~\hyperref[app:dirBipHamP]{Appl. 1.2}. 

Again, there could be cancellation amongst monomials corresponding to good Hamiltonian spanning trees, because the signs of the monomials are not all consistent as in the directed spanning tree polynomial. The random weight function $W$ ensures that at least one monomial survives the summation with high probability due to the isolation lemma,~\cref{lem:isolate}. 

	\section{Counting matchings and coverings}\label{sec:match}

\customlabel{app:BipPMCount}{Application 2.1: Counting perfect matchings in bipartite graphs}

As input we have a bipartite graph $G=(V_1\sqcup V_2,E)$, such that $|V_1|=|V_2|=n$. We are interested in counting the number of perfect matchings in $G$. For this counting problem, we illustrate the recipe in action.

\subparagraph{Construction of $H$.} The graph $H$ is constructed from $G$ by introducing a special vertex $s$ and adding an edge from it to every vertex in $V_1$. Let $X$ be the set of vertex variable for vertices in $V_1$, i.e., $X=\{x_u\mid u\in V_1\}$, and let $y$ another formal variable. The weights of the edges in $H$ are defined as follows. For an edge $\{u,v\}$, $u\in V_1$ and $v\in V_2$, the weight is $x_u$, and for an edge from $s$ to a vertex in $V_1$, the weight is $y$. Let $A_s$ be the weighted incidence matrix of $H$ under an arbitrary orientation, with the row corresponding to $s$ removed. 
\subparagraph{Construction of $C$.} The matrix $C$ set to be the incidence matrix of the unweighted incidence matrix of $H$ under the same orientation as in $A_s$, with row corresponding to the vertex $s$ removed. 

\subparagraph{Computation.}  

The number of perfect matchings in $G$ can be computed as follows.

\[
\#\mathrm{PM}(G) = \frac{1}{2^n}\sum_{\bar{x}\in\{-1,1\}^{n}}[y^n]\left(\bold{x}\det(A_sC^\top)\right)
\] 
where, $\bold{x} = \prod\limits_{x_u\in X}x_u$. 

\subparagraph{Result.}
We can compute the number of perfect matchings in $G$ (on $2n$ vertices) in $\OO^*(2^n)$ time and polynomial space. See~\cref{thm:bpm} for a proof. 

\subparagraph{Why it works.} The idea is that a spanning tree of $H$ that includes all the edges incident on $s$ and in which the vertices of $V_1$ have degree $2$ contains a perfect matching of $G$. Moreover, any perfect matching of $G$ can be extended to such a spanning tree of $H$ in a unique way. See~\cref{fig:PM} for an illustrative example.
\linebreak

\begin{figure}
	\centering
	\begin{subfigure}{0.4\textwidth}
		\centering
		\includegraphics[width=0.8\textwidth]{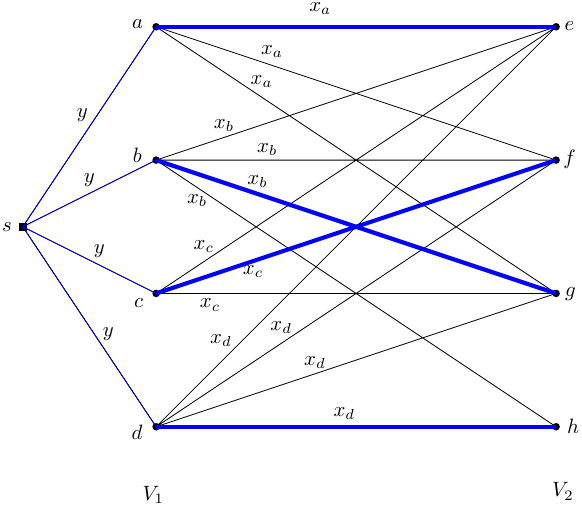}
		\caption{ }
		\label{fig:goodTree}	
	\end{subfigure}
	\begin{subfigure}{0.4\textwidth}
		\centering
		\includegraphics[width=0.8\textwidth]{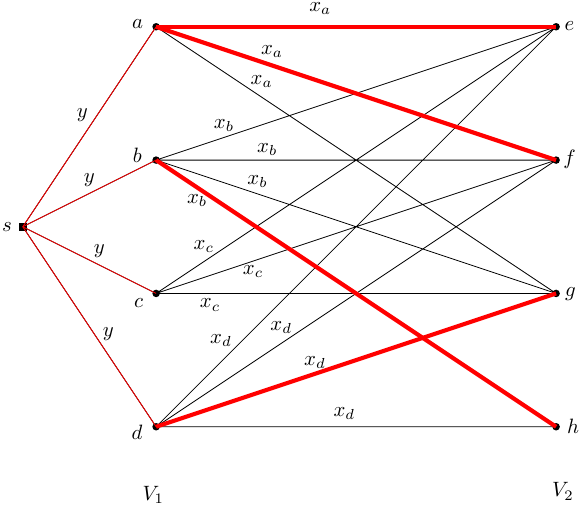}
		\caption{ }
		\label{fig:badTree}	
	\end{subfigure}
	\caption{Matchings in a bipartite graph $G=(V_1\sqcup V_2,E)$ on $8$ vertices. (a) A `good' spanning tree containing a matching (graph edges are thick blue edges). The monomial corresponding to the tree is $y^4x_ax_bx_cx_d$. (b) A `bad' spanning tree; does not contain any matching. Monomial corresponding to the tree is $y^4x^2_ax_bx_d$.}
	\label{fig:PM}
\end{figure}

\noindent\customlabel{app:MMCount}{Application 2.2: Counting Maximum Matchings}

For a graph $G=(V,E)$, let $\nu=\nu(G)$ denote the size (number of edges) of the maximum matching in $G$. Denote by $\#\mbox{MM}(G)$ the number of maximum matchings in $G$. 
First, we compute the Gallai-Edmonds Decomposition of $G$ in polynomial time.
Let $A,C,D$ be the parts of the Gallai-Edmonds Decomposition of $G$, and let the connected components of $D$ be $\{D_i\}_{i\in[k]}$ for some $k$.

\subparagraph{Construction of $H$.} 

First, we define a weighted bipartite graph $\widetilde H=(A\sqcup[k],\widetilde E)$, such that there is an edge between $a\in A$ and $i\in [k]$ in $\widetilde H$ if there exists a vertex $d\in D_i$ that is adjacent to $a$ in $G$. That is, $\widetilde E = \{ \{a,i\} \mid a\in A, i\in [k], \exists d\in D_i, \{a,d\}\in E\}$. Weight of an edge $\{a,i\}$ of $\widetilde H$, $w(\{a,i\})$, is given as follows.
\[
	w(\{a,i\})=\sum_{d\in D_i:(a,d)\in E}\#\mbox{PM}(D_i\setminus \{d\})
\] 
For any $i\in [k]$, let 
\[
	W_i = \sum_{d\in D_i}\#\mbox{PM}(D_i\setminus \{d\})
\]
where $\#\mbox{PM}(D_i\setminus d)$ denotes the number of perfect matchings in the graph $G[D_i\setminus \{d\}]$. 

Now, construct $H$ from $\widetilde H$ as follows. Introduce a special vertex $s$ and add edges from it to all the vertices in $A\sqcup [k]$. So, $H= (\{s\}\cup A\cup [k], \widetilde E\cup \{\{s,t\}\mid t\in A\cup[k]\})$.
Let $X$ be the sets of vertex variables for vertices in $A$, i.e., $X=\{x_a \mid a\in A\}$  and let $Y = \{y_1,y_2\}$ be another set of formal variables.
The edges in $H$ are assigned weights as follows: (i) for an edge $\{a,i\}$ in $\widetilde E$, the weight is $w(\{a,i\})x_a$, (ii) for an edge from $s$ to a vertex in $A$, the weight is $y_1$, (iii) and for an edge from $s$ to any vertex in $[k]$, its weight is $W_iy_2$.

\subparagraph{Construction of $C$.} The matrix $C$ set to be the incidence matrix of the unweighted graph underlying $H$, with row corresponding to $s$ removed. 

\subparagraph{Computation.}  

The number of maximum matchings in $G$ can be computed as follows.

\[
\#\mbox{MM}(G) = \#\mbox{PM}(C(G))\cdot  \frac{1}{2^{|A|}}\sum_{\bar{x}\in\{-1,1\}^{|A|}}[y_1^{|A|}y_2^{k-|A|}]\left(\bold{x}\det(A_sC^\top)\right)
\] 
where, $\bold{x} = \prod_{x_u\in X}x_u$. 

Notice that we can compute $\#\mbox{PM}(C(G))$ in $\mcO^*(2^{|C|/2})$ time using \cite{Bjorklund11} as a black box (in case of bipartite graphs, we can use our construction for counting perfect matching instead). In the construction of $H$, we need to compute $\#\mbox{PM}(D_i\setminus d)$ for all the components $D_i$ and all $d\in D_i$. These computations can again be performed in $\mcO^*(2^{(|D_i|-1)/2})$ time using \cite{Bjorklund11}. Finally, computing the above sum takes $\mcO^*(2^{|A|})$ time. Overall, the complexity is 
\[
	\mcO^*(2^{|C|/2}+\sum_i2^{(|D_i|-1)/2}+2^{|A|})
\] which is in turn $\mcO^*(2^{\nu})$ due to \cref{item:5} of the Gallai-Edmonds Decomposition.
\subparagraph{Result.}
We can count the number of maximum matchings in $G$ in time $\mcO^*(2^{\nu(G)})$ and polynomial space, where $\nu(G)$ is the size of any maximum matching in $G$. See~\cref{thm:maxmatch} for a proof.

\subparagraph{Why it works.} From \cref{item:4}, it is clear that the number of maximum matchings in $G$, $\#\mbox{MM}(G)$ is equal to $\#\mbox{PM}(C(G))$ times the number of maximum matchings in $A\cup D$. Moreover,  every spanning tree of $H$ that contains all the edges between $s$ and vertices in $A$, and exactly $k-|A|$ edges between $s$ and vertices in $[k]$ such that $A$ vertices have degree $2$, contains a maximum matching of $\widetilde H$, and vice versa. The weights $w(a,i)$ and $w(D_i)$ are multiplied to the edges to account for the number of ways a maximum matching of $\widetilde H$ can be extended to that of $A\cup D$.\\

\noindent\customlabel{app:kStarCount}{Application 2.3: Counting perfect $k$-star covers}

As input we have an undirected graph $G=(V,E)$. We are interested in counting the number of perfect $k$-star covers of $G$. A $k$-star is a tree on $k$ vertices where one vertex, called the centre of the star, has degree $k-1$, and all other vertices are leaves. In particular, a matching edge is a $2$-star. A perfect cover of $G$ with $k$-stars is a set of vertex-disjoint subgraphs, each of them a $k$-star, that together cover all the vertices of $G$.  

We do this count in the following way: 
\begin{enumerate}
	\item Select a subset $V_0\subseteq V$ of size $n/k$, and count the number of $k$-star covers of $G$ in which, for every $k$-star in the cover, its centre lies in $V_0$. Denote the number of such $k$-star covers by $\#\mbox{k-Star}_{V_0}(G)$.
	\item Then, the number of $k$-star covers of $G$ is: 
	\[
		\#\mbox{k-Star}(G)=\sum\limits_{V_0\in\binom V {n/k}}\#\mbox{k-Star}_{V_0}(G).
	\]
\end{enumerate}

We show how to compute $\#\mbox{k-Star}_{V_0}(G)$ for a fixed $V_0\subset V$.
\subparagraph{Construction of $H$.} The graph $H$ is constructed from $G$ by introducing a special vertex $s$, and adding an edge from it to every vertex in $V_0$. The only edges of $G$ that are present in $H$ are those with one end point in $V_0$ and the other in $V\setminus V_0$. Let $X=\{x_u\mid u\in V_0\}$ be the set of vertex variables for vertices in $V_0$. Let $y$ be another formal variable. The weights of the edges in $H$ are given as follows. For an edge $\{u,v\}$ of $H$, $u\in V_0$ and $v\in V\setminus V_0$, the weight is $x_u$, and for an edge from $s$ to any vertex of $V_0$, the weight is $y$.
\subparagraph{Construction of $C$.} The matrix $C$ set to be the incidence matrix of the unweighted graph underlying $H$, with row corresponding to $s$ removed. 

\subparagraph{Computation.}  
The number of $k$-star covers of $G$, where the centres of the stars lie in $V_0$, can be computed as follows.
\[
\#\mbox{k-Star}_{V_0}(G) = \frac{1}{k^{n/k}}\sum_{\bar{x}\in\{1,\omega,\ldots,w^{k-1}\}^{n/k}}[y^{n/k}]\left(\bold{x}\det(A_sC^\top)\right)
\] 
where $\bold{x} = \prod_{x_u\in X}x_u$ and $1,\omega,\ldots,w^{k-1}$ are the $k$th roots of unity.

\subparagraph{Why it works.} The idea is that a spanning tree of $H$ that contains all the edges that are incident on $s$ and has the degree of vertices in $V_0$ as $k$ corresponds to a perfect $k$-star cover of $G$ with the centres of the stars being in $V_0$. Moreover, any perfect $k$-star cover of $G$, where only the vertices in $V_0$ are to be the centres of the stars, can be extended to such a spanning tree of $H$ in a unique way. 

Next, we need to sum up over all $n/k$ size subsets $V_0$ of $V$. The overall complexity is then $\mcO^*\left(\binom n {n/k}k^{n/k}\right)$ which is $\mcO^*(2^{nH(1/k)+(n\log k)/k})$ where $H$ is the binary entropy function. Here, 
\[
	H(1/k)=(1/k)\log k -(1-1/k)\log(1-1/k)\le (2/k)\log k
\] 
when $k\ge 2$. Thus the overall complexity is $\mcO^*(2^{3(n\log k) /k})$.

\subparagraph{Result.}
We can compute the number of perfect $k$-star covers of a graph $G$ on $n$ vertices in $\mcO^*(2^{3(n\log k) /k})$ time, and polynomial space. See~\cref{thm:kstar} for a proof. 

\begin{remark}
	Some observations:
\begin{enumerate}
	\item The number of perfect $k$-star covers of a graph on $n$ vertices for $k=\mcO( n^{\delta})$ for $\delta\in (0,1)$ can be computed in $\mcO^*\left(2^{\mcO(n^{1-\delta}\log n)}\right)$, i.e., $\mcO^*(2^{o(n)})$ time.
	\item For $k=\Theta(n)$, the number of perfect $k$-star covers of a graph on $n$ vertices can be computed in $\poly(n)$ time.
	\item Notice that $\lim\limits_{k\rightarrow \infty}2^{(3\log k)/k}=1$. Hence, the running time approaches an arbitrarily small exponential function as $k$ grows larger.
\end{enumerate}
\end{remark}

\noindent\customlabel{app:BipkMCount}{Application 2.4: Counting $k$-Matchings in Bipartite Graphs}

As input, we have a bipartite graph $G=(V_1\sqcup V_2,E)$ such that $|V_1|=n_1,|V_2|=n_2$, $n_1\le n_2$, and $n_1+n_2=n$. We are interested in counting the number of matchings of size $k$ in $G$, denoted $\#\mbox{M}_k(G)$, using the previous recipe.
\subparagraph{Construction of $H$.} The graph $H$ is constructed from $G$, by introducing a special vertex $s$, and adding an edge from it to every vertex in $V_1\cup V_2$. The variable sets are, $ X=\{x_u: u\in V_1\}$ and $ Y=\{y_1,y_2\}$. The weights of the edges in $H$ are as follows. For an edge $\{u,v\}$, $u\in V_1, v\in V_2$, the weight is $x_u$, and for an edge from $s$ to any $V_i$ vertex, the weight is $y_i$ for $i\in\{1,2\}$. 
\subparagraph{Construction of $C$.} The matrix $C$ is set to be the incidence matrix of the unweighted graph underlying $H$, with row corresponding to $s$ removed. 

\subparagraph{Computation.}  

The number of matchings of size $k$ in $G$ can be computed as follows.

\[
\#\mbox{M}_k(G) = \frac{1}{2^{n_1}}\sum_{\bar{x}\in\{-1,1\}^{n_1}}[y_1^{n_1}y_2^{n_2-k}]\left(E_k( X)\det(A_sC^\top)\right)
\] 
where, $E_k( X)$ is the elementary symmetric polynomial of degree $k$ over the variables $ X$.

Notice that $E_k(X)$ can be efficiently computed as $E_k(X)=[t^k]\Pi_{u\in V_1}(1+t\cdot x_u)$ where $t$ is a variable. The above gives a $\mcO^*(2^{n/2})$ algorithm for counting $k$-matchings for any $k$.

\subparagraph{Why it works.} The idea is that a spanning tree of $H$ that contains all the edges between $s$ and $V_1$, and exactly $n_2-k$ edges between $s$ and $V_2$, such that exactly $k$ vertices of $V_1$ have degree $2$, contains a $k$ size matching of $G$, and vice versa.

\subparagraph{Result.}
We can count the number of $k$-matchings in a bipartite graph $G$, in time $\mcO^*(2^{n/2})$ and polynomial space, where $n$ is the number of vertices in $G$. See~\cref{thm:kmatch} for a proof.

	\section{Formal statements and their proofs}\label{sec:thmproofs}
In this section, we give formal proofs of the claimed results in the applications.

\subsection*{Proof for \hyperref[app:undirBipHamP]{Application 1.1}}
\begin{theorem}[\cite{ Bjorklund14}]
	\label{thm:buham}
		Given an undirected bipartite graph $G$ on $n$ vertices, we can correctly detect whether a Hamiltonian path exists in $G$ with high probability in time $O^*(2^{n/2})$ and polynomial space. 
\end{theorem}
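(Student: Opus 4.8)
The plan is to instantiate the construction of \hyperref[app:undirBipHamP]{Application 1.1} and turn its informal ``why it works'' argument into a rigorous chain consisting of a Cauchy--Binet expansion of $\det(A_sC^\top)$, a combinatorial characterisation of the surviving terms, and an application of the isolation lemma to rule out global cancellation. Throughout I would write $|V_1|=|V_2|=n/2$, fix endpoints $s\in V_1$ and $t\in V_2$ (iterating over the $\OO(n^2)$ choices of $(s,t)$ costs only a polynomial factor, and in an equal-part bipartite graph the two endpoints of any Hamiltonian path necessarily lie in opposite parts), adjoin the pendant vertex $t'$ to $t$, and build $H$, $A_s$, $C$ exactly as prescribed there.

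First I would apply the Cauchy--Binet theorem to write $\det(A_sC^\top)=\sum_S \det(A_s[S])\det(C[S])$, the sum over edge-sets $S\subseteq E(H)$ with $|S|=n/2$ (the common row-dimension). Two facts then pin down the contributing $S$: by the matrix-tree theorem \cref{lem:mttU}, $\det(A_s[S])\neq 0$ exactly when $S$ is a spanning tree of $H$, in which case it equals $\pm\prod_{e\in S} z^{W(e)}x_{u_e}x_{v_e}$; and since each column of $C$ has a single nonzero entry, sitting in the row of its $V_2$-label, $\det(C[S])\neq 0$ iff the edges of $S$ carry $n/2$ \emph{distinct} labels, i.e.\ all of $V_2$ occurs. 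Hence only spanning trees whose labels form a permutation of $V_2$ survive, each contributing $\epsilon_S\, z^{\sum_e W(e)}\prod_e x_{u_e}x_{v_e}$ with $\epsilon_S=\operatorname{sign}\det(A_s[S])\cdot\operatorname{sign}\det(C[S])\in\{\pm1\}$.

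Next I would run the $\pm1$ sieve. After multiplying by $x_sx_{t'}$ and summing over $\bar x\in\{-1,1\}^{n/2+1}$, every monomial carrying a variable of odd degree cancels, so the only trees that persist are those whose degree is even at all vertices other than $s,t'$ and odd at $s,t'$ (the factor $x_sx_{t'}$ absorbing the two degree-one endpoints). The key combinatorial lemma I would prove is that a tree has exactly two odd-degree vertices iff it is a path joining them: leaves have odd degree, so ``exactly two odd-degree vertices $=\{s,t'\}$'' forces exactly two leaves, namely $s$ and $t'$, whence an $s$--$t'$ Hamiltonian path of $H$. Unfolding each edge $\{u,v\}$ labelled $w$ into the length-two path $u\,w\,v$ of $G$, and using that the labels exhaust $V_2$ while the tree-vertices exhaust $V_1\cup\{t'\}$ (the pendant edge $t\,t'$ forcing the last label to be $t$), I would exhibit a bijection between these ``good'' $s$--$t'$ Hamiltonian paths of $H$ and the $s$--$t$ Hamiltonian paths of $G$. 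This collapses the whole expression to $2^{n/2+1}\sum_{P}\epsilon_P\, z^{W(P)}$, the sum ranging over $st$-Hamiltonian paths $P$ of $G$ with $W(P)=\sum_{e\in P}W(e)$.

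The main obstacle — and the reason a single determinant does not already settle the matter — is that the signs $\epsilon_P$ need \emph{not} agree across different paths, because $C$ is not the incidence matrix of $H$ (contrast the plain matrix-tree theorem, where the two Cauchy--Binet factors share a sign and the spanning-tree polynomial is monotone). Distinct good paths can therefore cancel. To defeat this I would keep $z$ formal and invoke the isolation lemma \cref{lem:isolate} with the random weights $W:E(H)\to[4|E(H)|]$: with probability at least $3/4$ there is a unique minimum-weight $st$-Hamiltonian path $P^\star$, so the coefficient of $z^{W(P^\star)}$ in the collapsed sum is $\pm 2^{n/2+1}\neq 0$ and cannot be cancelled, making the $z$-polynomial nonzero exactly when a Hamiltonian path exists. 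For feasibility I would note that the entries of $A_sC^\top$ are polynomials in $z$ of polynomially bounded degree, so each determinant — and hence non-vanishing of the accumulated $z$-polynomial — is checkable in polynomial time and space (as recalled in the preliminaries), while the outer sum has $2^{n/2+1}=\OO^*(2^{n/2})$ terms; the one-sided error $\le 1/4$ stems solely from the isolation lemma and is driven down by independent repetition.
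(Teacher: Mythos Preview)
Your proposal is correct and follows essentially the same route as the paper's proof: Cauchy--Binet expansion of $\det(A_sC^\top)$, identification of the surviving terms as spanning trees of $H$ whose edge-labels cover $V_2$, the $\pm1$ sieve to isolate $s$--$t'$ Hamiltonian paths, and the isolation lemma to prevent sign cancellations among the remaining terms. If anything, your treatment of the ``two odd-degree vertices forces a path'' step and the explicit bijection back to $st$-Hamiltonian paths of $G$ is spelled out more carefully than in the paper.
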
 
\begin{proof}
	For notational convenience, we assume the graph has $2n+1$ vertices (the $+1$ is because of $t'$). Using Cauchy-Binet theorem, we can write 
	\[
		\det(A_sC^\top) = \sum_{\substack{F\subseteq E(H)\\|F|=n}}\det(A_s[F])\det(C[F]).
	\]
	Columns of $A_s$ and $C$ are indexed by the labelled edges $E(H)$. Let $F$ be a set of $n$ such edges. From the proof of ~\cref{lem:mttU} (part (b)), if $F$ is a spanning tree of $H$ then $\det(A_s[F])$ is $\pm1$ times the monomial corresponding to that tree, where $x_{e=uwv}=x_ux_vz^{W(uwv)}$; otherwise $\det(A_s[F])=0$. Moreover, $\det(C[F])=\pm1$ if all the vertices in $V_2$ are picked exactly once by some labelled edge in $F$; otherwise $\det(C[F])=0$, since one of the rows of $C[F]$ becomes an all zero row. 
	
	Therefore $\det\left(A_sC^\top\right)$ is a signed sum of spanning trees of $H$ in which every vertex $v\in V_2$ appears as the label of exactly one edge in the tree. In other words, $\det\left(A_sC^\top\right)$ contains monomials corresponding to spanning trees of $G$ in which all vertices of $V_2$ have degree $2$. 
	
	Let $T$ be such a spanning tree of $G$ with its corresponding monomial being $T(X,z)$. Observe that if $T$ has any odd degree vertex in $V_1$ other than $s$ or $t'$, then 
	\[
		\sum_{\bar{x}\in\{-1,1\}^{n+1}}x_sx_{t'}T(X,z)=0.
	\] 
	
	Moreover if $s$ and $t'$ are the only leaf vertices then $T$ is a Hamiltonian path between $s$ and $t'$ (equivalently $t$), and
	 \[
	 	\sum_{\bar{x}\in\{-1,1\}^{n+1}}x_sx_{t'}T(X,z)=2^nz^{W(T)}
	 \] 
	 where $W(T)=\sum_{uwv\in T}W(uwv)$. By~\cref{lem:isolate}, $W$ isolates a min-weight $st'$ Hamiltonian path with probability $3/4$. Therefore, if $G$ has a $s,t$ Hamiltonian path then with probability $3/4$, 
	 \[					\sum_{\bar{x}\in\{-1,1\}^n}x_sx_{t'}\det\left(A_sC^\top\right)~~\neq0,
	 \] and if $G$ has no $s,t$ Hamiltonian path then the sum is $0$. 
	 
	 For the runtime, observe that under each substitution of $\bar{x}$ from $\{-1,1\}^{n+1}$, we can compute the determinant of $A_sC^\top$, i.e., determinant of a matrix whose entries are univariate polynomials ( in $z$) of degree at most $4|E(H)|(n+1)$, which can be computed in polynomial time~\cite[Theorem 3.2]{AAM}. Since, there are $2^{n+1}$ many determinants to compute, we will be done in $O^*(2^{n})$ time. We require only polynomial space to accumulate the outer sum over all the substitutions.
\end{proof} 

\subsection*{Proof for \hyperref[app:dirBipHamP]{Application 1.2}}
\begin{theorem}[\cite{BKK17}]
	\label{thm:bdham}
	Given a directed bipartite graph $G$ on $n$ vertices, we can correctly detect whether there exists a directed Hamiltonian path in $G$ with high probability in $O^*(3^{n/2})$ time and polynomial space. 
\end{theorem}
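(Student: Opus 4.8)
The plan is to mirror the structure of the proof of \cref{thm:buham}, adjusting for the three differences introduced in the directed construction: the edge weight is $z^{W(uwv)}x_u x_v^2$ instead of $z^{W(uwv)}x_u x_v$, the sieving monomial is $x_s^2 x_{t'}$ instead of $x_s x_{t'}$, and the summation runs over cube roots of unity $\{1,\omega,\omega^2\}$ instead of $\{-1,1\}$. First I would invoke the Cauchy--Binet theorem to write $\det(A_s C^\top) = \sum_{F}\det(A_s[F])\det(C[F])$, where $F$ ranges over $n$-element sets of labelled edges of $H$. Exactly as before, $\det(A_s[F])$ is $\pm 1$ times the tree-monomial when $F$ is a spanning tree of $H$ and zero otherwise, while $\det(C[F]) = \pm 1$ precisely when every vertex of $V_2$ appears as the label of exactly one edge of $F$. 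Hence $\det(A_s C^\top)$ is a signed sum over spanning trees of $H$ in which each $V_2$-vertex is used as a label exactly once.

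Next I would analyze the exponent of each vertex variable in the tree-monomial. For a spanning tree $T$ of $H$, the weight rule $x_u x_v^2$ on the directed edge $(u,v)$ means the exponent of $x_w$ equals $(\text{out-degree of }w\text{ in }T) + 2\cdot(\text{in-degree of }w\text{ in }T)$. The key observation is that multiplying by $x_s^2 x_{t'}$ and then summing over $\{1,\omega,\omega^2\}^{n+1}$ sieves for exactly those trees in which every vertex variable occurs with exponent divisible by $3$: any monomial with some exponent not $\equiv 0 \pmod 3$ cancels because $1+\omega+\omega^2=0$, while a monomial $\prod_v x_v^{3}$ contributes $3^{n+1}$ per surviving term (up to the $z$-factor). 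I would then verify that, among the trees surviving the $C$-filter, the only ones whose augmented monomial $x_s^2 x_{t'}\cdot T(X,z)$ has all exponents divisible by $3$ are precisely the directed $s$-to-$t'$ Hamiltonian paths: on such a path every internal vertex has in-degree and out-degree $1$ (giving exponent $3$), the source contributes out-degree $1$ plus the extra $x_s^2$, and the sink $t'$ contributes in-degree $1$ plus the extra $x_{t'}$. For any other surviving tree there is a vertex whose in/out-degree combination yields an exponent not divisible by $3$ (for instance a non-$s,t'$ leaf, or a correctly-shaped path traversed against orientation), so it cancels.

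Having reduced the surviving terms to directed $st'$-Hamiltonian paths, I would note that distinct such paths may still cancel against one another, since unlike the directed matrix-tree polynomial the signs of $\det(A_s[F])\det(C[F])$ need not agree. This is where the random exponents of $z$ enter: by the isolation lemma (\cref{lem:isolate}) applied to the family of $st'$-Hamiltonian paths with ground set the labelled edges, the random weight function $W$ produces a unique minimum-weight such path with probability at least $3/4$, so its $z^{W(\cdot)}$ term cannot be cancelled, guaranteeing a nonzero sum with high probability when a Hamiltonian path exists; when none exists the sum is identically zero. Finally, for the running time I would observe that each of the $3^{n+1}$ substitutions of $\bar x$ requires computing the determinant of a matrix whose entries are univariate polynomials in $z$, which takes polynomial time by the cited determinant-over-$\mathbb{Z}[z]$ routine, giving $\OO^*(3^{n/2})$ total time (writing $2n{+}1$ for the vertex count as in \cref{thm:buham}) and polynomial space.

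\textbf{Main obstacle.} The step I expect to require the most care is the exponent bookkeeping: I must check that the $x_u x_v^2$ weighting together with the $x_s^2 x_{t'}$ prefactor makes divisibility-by-$3$ of \emph{all} vertex exponents characterize directed Hamiltonian paths exactly — in particular ruling out undirected Hamiltonian paths that are not consistently directed (where some vertex has two in-edges or two out-edges, forcing an exponent of $4$ or $2$) and ruling out non-path trees. This asymmetric weighting is precisely what distinguishes the directed case from \cref{thm:buham}, and getting the degree arithmetic right modulo $3$ is the crux of the argument.
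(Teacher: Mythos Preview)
Your proposal is correct and follows essentially the same approach as the paper's proof: Cauchy--Binet to restrict to spanning trees, the $C$-filter to force every $V_2$-label to appear once, the cube-root sieve on the $x_u x_v^2$ weights to isolate monomials with all exponents divisible by $3$, and then isolation to avoid sign cancellations. The paper organizes the exponent analysis as a two-step case split (first rule out non-paths via a non-$s,t'$ leaf giving exponent $1$ or $2$; then rule out wrongly-oriented paths via a vertex of exponent $2$ or $4$), which is exactly the case analysis you sketch in your ``main obstacle'' paragraph.
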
 
\begin{proof}
	The proof is very similar to the undirected case. The columns of $A_s$ and $C$
	are indexed by the labelled arcs $E(H)$. Let $F$ be a set of $n$ such arcs.
	If $F$ is a spanning tree of $H$ in the underlying undirected setting, then
	$\det(A_s[F])$ is $\pm 1$ times the monomial corresponding to the tree, where
	$x_{e=uwv} = x_u x_v^2 z^{W(uwv)}$; otherwise, $\det(A_s[F]) = 0$. As in the
	undirected case, $\det(C[F]) = \pm 1$ if all vertices in $V_2$ are picked
	exactly once by some labelled arc in $F$; otherwise, $\det(C[F]) = 0$.
	
	In other words, $\det(A_s C^\top)$ is a signed sum of spanning trees (in the
	underlying undirected graph) of $G$ in which all vertices of $V_2$ have both
	in-degree and out-degree equal to $1$. Let $T$ be such a spanning tree of $G$,
	with corresponding monomial $T(X,z)$. Observe that if $T$ has a leaf
	$u \in V_1$ that is not $s$ or $t'$, then the degree of $x_u$ in the monomial
	$T(X,z)$ is either $1$ or $2$, depending on the arc of $T$ incident to $u$.
	Therefore, if $T$ has any leaf vertices in $V_1$ other than $s$ and $t'$, then
	\[
	\sum_{\bar{x} \in \{1,\omega,\omega^2\}^{n+1}} x_s^2 x_{t'}\, T(X,z) = 0.
	\]
	Thus, the only trees for which this sum is non-zero are those in which $s$ and
	$t'$ are the only leaves. This occurs precisely when $T$ is a path between $s$
	and $t'$ in the underlying undirected version of $H$.
	
	Now suppose that $T$ is such an undirected path, but
	is not a directed path from $s$ to $t'$. Then there exists a vertex
	$u \in V_1$ with in-degree $2$ and out-degree $0$ in $T$, in which case the
	degree of $x_u$ in $T$ is $4$; or with in-degree $0$ and out-degree $2$, in
	which case the degree of $x_u$ in $T$ is $2$. In either situation, the degree
	of $x_u$ in $T(X,z)$ is not a multiple of $3$, and therefore
	\[
	\sum_{\bar{x} \in \{1,\omega,\omega^2\}^{n+1}} x_s^2 x_{t'}\, T(X,z) = 0.
	\]
	
	If $T$ is a Hamiltonian path from $s$ to $t'$, then
	\[
	\sum_{\bar{x} \in \{1,\omega,\omega^2\}^{n+1}} x_s^2 x_{t'}\, T(X,z)
	= 3^{n+1} z^{W(T)},
	\]
	where $W(T)$ is the total weight of the arcs in $T$. By the Isolation Lemma,~\cref{lem:isolate},
	$W$ isolates a minimum-weight Hamiltonian $s \to t'$ path with probability
	$3/4$. Therefore, if $G$ has an $s$-to-$t$ Hamiltonian path, then with
	probability $3/4$,
	\[
	\sum_{\bar{x} \in \{1,\omega,\omega^2\}^{n+1}}
	x_s^2 x_{t'}\, \det(A_s C^\top) \neq 0,
	\]
	and if $G$ has no $s$-to-$t$ Hamiltonian path, then the sum is $0$.
	
\end{proof}

\subsection*{Proof for \hyperref[app:BipPMCount]{Application 2.1}}
\begin{theorem}[\cite{Ryser}]
	\label{thm:bpm}
	Given a bipartite graph $G=(V_1\sqcup V_2,E) $ on $n$ vertices, the number of perfect matchings in $G$ can be computed in $O^*(2^{n/2})$ time and polynomial space. 
\end{theorem}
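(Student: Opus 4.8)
The plan is to apply the undirected Matrix Tree Theorem (\cref{lem:mttU}) to the auxiliary graph $H$ and then carve out exactly the spanning trees that encode perfect matchings using a two-stage filter: a coefficient extraction in the variable $y$ followed by a $\pm 1$ sieve in the vertex variables. Adopting the convention of the application and the preceding proofs, I set $|V_1|=|V_2|=n$; then $H$ has $2n+1$ vertices, and the bound $O^*(2^{n/2})$ of the statement (phrased in the total number of vertices) reads $O^*(2^n)$ in this notation.

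First I would invoke \cref{lem:mttU} to write $\det(A_sC^\top)=\mathrm{sp}(H)=\sum_{T}\prod_{e\in T}w(e)$, the sum running over all spanning trees $T$ of $H$, with every tree appearing with coefficient exactly $+1$. The essential point, in contrast with the Hamiltonian-path applications, is that here $C$ really is the incidence matrix of $H$, so \cref{lem:mttU} applies verbatim and all the signs agree; this consistency of signs is precisely what makes \emph{exact counting} possible and removes any need for the isolation lemma.

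Next I would read off the shape of each tree's monomial. Writing $\ell(T)$ for the number of edges of $T$ incident to $s$ and $m_u(T)$ for the number of $V_1$--$V_2$ edges of $T$ meeting $u\in V_1$, the weight of $T$ is $y^{\ell(T)}\prod_{u\in V_1}x_u^{m_u(T)}$. Multiplying by $\mathbf{x}=\prod_{u\in V_1}x_u$ and extracting $[y^n]$ retains exactly the trees with $\ell(T)=n$, i.e.\ those containing all $n$ edges out of $s$; since $T$ has $2n$ edges on $2n+1$ vertices, such a tree then has exactly $n$ edges among the $V_1$--$V_2$ edges. The crux is a short combinatorial observation: once every $s$-edge is present, two $V_1$--$V_2$ edges at a common $V_2$-vertex $v$, say $\{u_1,v\}$ and $\{u_2,v\}$, would close the cycle $s,u_1,v,u_2,s$, so acyclicity forbids this and connectivity forces each $V_2$-vertex to carry exactly one $V_1$-edge; hence $\sum_{u\in V_1}m_u(T)=n$. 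The $\pm1$ sieve $\tfrac{1}{2^n}\sum_{\bar x\in\{-1,1\}^n}$ keeps only monomials in which every $x_u$ occurs to an even power, that is, every $m_u(T)+1$ is even, i.e.\ every $m_u(T)$ is odd. Since $n$ nonnegative odd integers summing to $n$ must all equal $1$, the surviving trees are precisely the ``broom'' at $s$ together with a perfect matching of $G$. Each such tree contributes $+1$ after the division by $2^n$, while every other tree contributes $0$, so the displayed formula equals $\#\mathrm{PM}(G)$ via the bijection between these trees and perfect matchings noted in the ``why it works'' discussion.

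The main obstacle, and really the heart of the argument, is this last counting step: upgrading ``every $V_1$-degree is odd'' to ``every $V_1$-degree equals $1$'' relies on the total $\sum_u m_u(T)=n$ being \emph{forced} by the tree structure (through the $[y^n]$ extraction together with acyclicity) rather than merely assumed; without this rigidity the even-degree sieve would admit spurious trees. For the running time, after substituting $\bar x\in\{-1,1\}^n$ the matrix $A_sC^\top$ has entries affine in the single variable $y$, so its determinant is a univariate polynomial in $y$ whose $y^n$-coefficient can be extracted in polynomial time (e.g.\ by \cite[Theorem 3.2]{AAM} or interpolation). There are $2^n$ sign-substitutions to accumulate, giving total time $O^*(2^n)$ and, since only a running sum must be maintained, polynomial space.
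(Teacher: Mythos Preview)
Your proof is correct and follows essentially the same approach as the paper: apply \cref{lem:mttU} to $H$, extract $[y^n]$ to force all $s$--$V_1$ edges into the tree, and use the $\pm1$ sieve over $V_1$-variables to isolate the trees encoding perfect matchings. Your combinatorial step is in fact cleaner than the paper's: where the paper argues by a case split on whether some $V_1$-vertex has degree~$1$, you directly observe that $\sum_u m_u(T)=n$ (via the $V_2$-cycle argument) and that the sieve forces every $m_u(T)$ odd, whence all equal~$1$; this closes the step ``degree $\ge 2$ forces degree $=2$'' that the paper leaves implicit.
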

\begin{proof}
	Let $\mathrm{sp}(X,y) = \det(A_s C^\top)$ be the spanning-tree polynomial of $H$.
	Then every monomial of $[y^n]\mathrm{sp}(X,y)$ corresponds to a spanning tree of
	$H$ in which all edges between $s$ and $V_1$ are included. Let $T$ be such a
	tree, and let $T(X,y)$ be its corresponding monomial. Observe that if any vertex
	$u \in V_1$ has degree $1$, i.e., the only edge incident to it is $(s,u)$, then
	\[
	\sum_{\bar{x} \in \{-1,1\}^{n}} \mathbf{x}\, [y^n](T(X,y)) = 0.
	\]
	
	Moreover, if all vertices in $V_1$ have degree at least $2$ in $T$, then every
	vertex in $V_1$ is forced to have degree exactly $2$; in this case, $T$
	contains a perfect matching of $G$. Also, every perfect matching of $G$ can be
	extended to such a tree in a unique way by adding all edges from $s$ to $V_1$. For such a tree
	$T$, we have
	\[
	\sum_{\bar{x} \in \{-1,1\}^{n}} \mathbf{x}\, [y^n](T(X,y)) = 2^n.
	\]
	
	Thus,
	\[
	\sum_{\bar{x} \in \{-1,1\}^{n}} [y^n]\!\left(\mathbf{x}\,\mathrm{sp}(X,y)\right)
	= 2^n \times (\text{number of perfect matchings in } G).
	\]
	
\end{proof}

\subsection*{Proof for \hyperref[app:BipkMCount]{Application 2.4}}
\begin{theorem}
	\label{thm:kmatch}
	Given a  graph $G=(V_1\sqcup V_2,E) $ on $n$ vertices, the number of $k$-matchings in $G$ can be computed in $\mcO^*(2^{n/2})$ time and polynomial space. 
\end{theorem}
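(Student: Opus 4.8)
The plan is to follow the template of the proof of \cref{thm:bpm}, but with the single even-degree constraint replaced by a $k$-matching constraint that is enforced jointly by the elementary symmetric factor $E_k(X)$ and the $\pm 1$ parity sieve. First I would invoke the undirected Matrix Tree Theorem, \cref{lem:mttU}, so that $\det(A_sC^\top)$ is literally the spanning-tree polynomial $\sum_{T\in\mathcal{T}(H)}\prod_{e\in T}w(e)$ of $H$, with every monomial carrying a $+$ sign; this positivity is exactly what lets us count rather than merely detect, so no isolation lemma is needed. With $w(\{s,u\})=y_1$ for $u\in V_1$, $w(\{s,v\})=y_2$ for $v\in V_2$, and $w(\{u,v\})=x_u$ for a graph edge, extracting $[y_1^{n_1}y_2^{n_2-k}]$ restricts the sum to spanning trees $T$ of $H$ that use all $n_1$ edges from $s$ to $V_1$ and exactly $n_2-k$ of the edges from $s$ to $V_2$. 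Since a spanning tree of $H$ has $n_1+n_2$ edges, each such $T$ contains exactly $k$ graph edges, and its surviving $X$-monomial is $\prod_{u\in V_1}x_u^{d_u}$, where $d_u=\deg_T(u)-1$ counts the graph edges at $u$ and $\sum_u d_u=k$.

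The heart of the argument is to show that multiplying by $E_k(X)=\sum_{|S|=k}\prod_{u\in S}x_u$ and summing over $\bar x\in\{-1,1\}^{n_1}$ isolates exactly the $k$-matchings. The sieve $\sum_{\bar x}\prod_u x_u^{a_u}$ equals $2^{n_1}$ when every $a_u$ is even and $0$ otherwise, so a term arising from a tree $T$ and a set $S$ survives iff $d_u+[u\in S]$ is even for all $u$, which forces $S=\{u:d_u\text{ odd}\}$ and, since $|S|=k$, forces exactly $k$ of the $d_u$ to be odd. Combined with $\sum_u d_u=k$ and $d_u\ge 0$, this forces every odd $d_u$ to equal $1$ and all remaining $d_u$ to be $0$, so the $k$ graph edges of $T$ meet $k$ distinct vertices of $V_1$. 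I would then use connectivity and acyclicity: the $k$ vertices of $V_2$ not joined to $s$ must each be reached through a graph edge, and with only $k$ graph edges available each uses exactly one, so the $k$ graph edges form a perfect matching between these $k$ vertices of $V_2$ and $k$ distinct vertices of $V_1$, i.e. a $k$-matching $M$ of $G$. Conversely, every $k$-matching $M$ extends uniquely to such a tree (all $s$--$V_1$ edges, the edges of $M$, and the $s$--$V_2$ edges to the $n_2-k$ unmatched vertices of $V_2$), and for it the unique surviving choice, namely $S$ equal to the set of $V_1$-endpoints of $M$, contributes $2^{n_1}$. Hence the full sum equals $2^{n_1}\cdot\#\mathrm{M}_k(G)$, and dividing by $2^{n_1}$ yields the stated formula.

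For the running time, for each of the $2^{n_1}$ sign patterns $\bar x$ I would compute $[y_1^{n_1}y_2^{n_2-k}]\det(A_sC^\top)$ by interpolation: after substituting the $\pm 1$ values, $\det(A_sC^\top)$ is a bivariate polynomial in $y_1,y_2$ of degree at most $n_1$ in $y_1$ and $n_2$ in $y_2$, so evaluating the numerical determinant at $\mathcal{O}(n^2)$ points $(y_1,y_2)$ and interpolating recovers the coefficient in polynomial time, while $E_k(\bar x)$ is a scalar read off from $\prod_u(1+t x_u)$. Since $n_1\le n_2$ gives $n_1\le n/2$, there are at most $2^{n/2}$ patterns, for total time $\mathcal{O}^*(2^{n/2})$ and only polynomial space (accumulating the outer sum). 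I expect the main obstacle to be the combinatorial core of the second paragraph: verifying that the joint constraints ``$E_k$ selects $S$'', ``$\sum_u d_u=k$'', and ``all degrees even'' pin down precisely the matchings — in particular that the $k$ graph edges cannot share a $V_2$-endpoint — and that the extension of each matching to a surviving tree is genuinely unique; the remaining steps are bookkeeping parallel to \cref{thm:bpm}.
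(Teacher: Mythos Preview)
Your proposal is correct and follows essentially the same approach as the paper's proof: invoke the undirected Matrix Tree Theorem so that $\det(A_sC^\top)$ is the spanning-tree polynomial of $H$, extract the $y_1^{n_1}y_2^{n_2-k}$ coefficient to restrict to trees using all $s$--$V_1$ edges and exactly $n_2-k$ of the $s$--$V_2$ edges (hence exactly $k$ graph edges), and then use the $\pm1$ sieve together with a size-$k$ monomial $x_S$ from $E_k(X)$ to isolate the trees whose $k$ graph edges form a $k$-matching. Your parity analysis (survival forces $S=\{u:d_u\text{ odd}\}$, and $|S|=k=\sum_u d_u$ pins $d_u\in\{0,1\}$) is exactly the content of the paper's ``no vertex of $S$ is a leaf'' dichotomy, just made more explicit; your acyclicity argument that the $k$ vertices of $V_2$ not joined to $s$ each receive exactly one graph edge is a detail the paper leaves implicit but which is indeed needed.
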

\begin{proof}
	The proof of this is very similar to the perfect matching case. Again, let
	$\mathrm{sp}(X, Y = \{y_1, y_2\}) = \det(A_s C^\top)$ be the spanning tree
	polynomial of $H$. Then every monomial of
	$[y_1^{n_1} y_2^{\,n_2 - k}]\mathrm{sp}(X, y_1, y_2)$ corresponds to a spanning
	tree of $H$ in which all edges between $s$ and $V_1$, and exactly $n_2 - k$
	edges between $s$ and $V_2$, are included. Let $T$ be such a tree, and let
	$T(X, y_1,y_2)$ denote its corresponding monomial.
	
	Let $S$ be a subset of $V_1$ of size $k$, and let $x_S = \prod_{u \in S} x_u$
	be the corresponding monomial of $E_k(X)$. Observe that if any vertex in $S$ is
	a leaf of $T$, then
	\[
	\sum_{\bar{x} \in \{-1,1\}^{n}} x_S\, [y_1^{n_1} y_2^{\,n_2 - k}](T(X,y_1,y_2))
	= 0.
	\]
	Moreover, if none of the vertices in $S$ is a leaf in $T$, then $T$ contains a
	matching of size $k$ of $G$ that saturates $S$. Note that any $k$-matching of
	$G$ that saturates $S$ can be extended to such a tree by adding all edges
	between $s$ and $V_1$, and between $s$ and the unmatched vertices in $V_2$.
	For such trees, we have
	\[
	\sum_{\bar{x} \in \{-1,1\}^{n}} x_S\, [y_1^{n_1} y_2^{\,n_2 - k}](T(X,y_1,y_2))
	= 2^{n_1} \times (\text{number of $k$-matchings of $G$ saturating } S).
	\]
	
	Thus,
	\[
	\sum_{\bar{x} \in \{-1,1\}^{n_1}}
	[y_1^{n_1} y_2^{\,n_2 - k}]\left(E_k(X)\det(A_s C^\top)\right)
	= 2^{n_1} \times (\text{number of $k$-matchings of } G).
	\]	
\end{proof}

\subsection*{Proof for \hyperref[app:MMCount]{Application 2.2}}
\begin{theorem}
	\label{thm:maxmatch}
	Given a  graph $G=(V_1\sqcup V_2,E) $ on $n$ vertices, the number of maximum matchings in $G$ can be computed in $\mcO^*(2^{\nu(G)})$ time and polynomial space. 
\end{theorem}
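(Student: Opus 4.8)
The plan is to lean on the Gallai--Edmonds decomposition to reduce maximum-matching counting to a weighted matching count that the spanning-tree sieving of \hyperref[app:MMCount]{Appl.~2.2} can evaluate, exactly as \cref{thm:bpm} evaluates the perfect-matching count. First I would compute the decomposition $A,C,D$ of $G$ in polynomial time, with the components of $D$ being $\{D_i\}_{i\in[k]}$. By \cref{item:4}, every maximum matching of $G$ splits into a perfect matching of $G[C]$ and an independently chosen maximum matching of the $A\cup D$ part, and conversely any such pair glues back to a maximum matching; hence
\[
\#\mbox{MM}(G)=\#\mbox{PM}(C(G))\cdot\#\mbox{MM}(G[A\cup D]).
\]
The first factor is supplied by Bj\"orklund's algorithm (or \cref{thm:bpm} in the bipartite case) as a black box, so the whole task is to count maximum matchings of $G[A\cup D]$.

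Next I would rewrite $\#\mbox{MM}(G[A\cup D])$ as a weighted count over systems of distinct representatives. By \cref{item:4} such a matching assigns each $a\in A$ to a vertex of a distinct component, i.e.\ it determines a matching $M$ of $\widetilde H$ saturating $A$; once $M$ is fixed, each matched component $D_i$ (with $a$ sent to some $d\in D_i$) must have $D_i\setminus\{d\}$ perfectly matched, contributing $w(\{a,i\})=\sum_{d\in D_i:(a,d)\in E}\#\mbox{PM}(D_i\setminus\{d\})$ choices, while each component $D_j$ left unsaturated by $A$ contributes a near-perfect matching in $W_j=\sum_{d\in D_j}\#\mbox{PM}(D_j\setminus\{d\})$ ways (here \cref{item:1} guarantees $D_j$ is factor-critical, so $W_j$ is precisely its number of near-perfect matchings). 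Multiplying over edges and over unsaturated components yields
\[
\#\mbox{MM}(G[A\cup D])=\sum_{M}\ \prod_{\{a,i\}\in M}w(\{a,i\})\ \prod_{j\notin\mathrm{im}(M)}W_j,
\]
where $M$ ranges over all matchings of $\widetilde H$ that saturate $A$.

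I would then evaluate this SDR sum with the matrix-tree identity for $H$, paralleling \cref{thm:bpm}. Cauchy--Binet expresses $\det(A_sC^\top)$ as a signed sum over spanning trees of $H$; extracting $[y_1^{|A|}y_2^{k-|A|}]$ forces all $|A|$ edges from $s$ to $A$ and exactly $k-|A|$ edges from $s$ to $[k]$ into the tree, leaving exactly $|A|$ edges of $\widetilde E$. Sieving against $\mathbf{x}=\prod_{a\in A}x_a$ over $\{-1,1\}^{|A|}$ then cancels every tree in which some $x_a$ has odd degree in $\mathbf{x}\cdot T$, i.e.\ keeps only trees where each $a$ meets an \emph{odd} number of $\widetilde E$-edges; since those $|A|$ edges must distribute one per $A$-vertex while keeping $H$ connected and acyclic, each $a$ meets exactly one $\widetilde E$-edge, going to one of the $|A|$ components not directly joined to $s$. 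Thus the surviving trees are in bijection with the matchings $M$ saturating $A$, each contributing $2^{|A|}$ times $\prod_{\{a,i\}\in M}w(\{a,i\})\prod_{j\notin\mathrm{im}(M)}W_j$, so dividing by $2^{|A|}$ recovers the SDR sum and the displayed computation of \hyperref[app:MMCount]{Appl.~2.2} equals $\#\mbox{MM}(G)$.

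Finally the runtime follows by collecting the black-box costs: the decomposition is polynomial, $\#\mbox{PM}(C(G))$ costs $\mcO^*(2^{|C|/2})$, each $\#\mbox{PM}(D_i\setminus\{d\})$ costs $\mcO^*(2^{(|D_i|-1)/2})$, and the sieving sum costs $\mcO^*(2^{|A|})$, for a total of $\mcO^*\!\left(2^{|C|/2}+\sum_i 2^{(|D_i|-1)/2}+2^{|A|}\right)$; by \cref{item:5}, $\nu=\tfrac{|C|}2+|A|+\sum_i\tfrac{|D_i|-1}2$ dominates every exponent, giving $\mcO^*(2^{\nu})$ in polynomial space. The step I expect to be the real obstacle is the weight bookkeeping of the second paragraph: showing that $w(\{a,i\})$ together with the $W_j$ counts \emph{each} maximum matching of $G[A\cup D]$ exactly once, which rests on the Gallai--Edmonds fact that choices inside distinct components are independent and that the $A$-to-component assignment is a genuine SDR.
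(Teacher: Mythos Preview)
Your proposal is correct and follows essentially the same approach as the paper: reduce via Gallai--Edmonds to $\#\mathrm{PM}(C(G))\cdot\#\mathrm{MM}(G[A\cup D])$, express the second factor as the weighted SDR sum over $A$-saturating matchings of $\widetilde H$ with weights $w(\{a,i\})$ and $W_j$, and recover that sum by the matrix-tree/$\{-1,1\}$-sieve identity on $H$ after extracting $y_1^{|A|}y_2^{k-|A|}$. Your tree-counting argument (that the surviving trees are exactly those with one $\widetilde E$-edge per $a\in A$ to the $|A|$ components not joined to $s$) and the runtime bookkeeping via \cref{item:5} are in fact spelled out more carefully than in the paper's own proof; the only cosmetic slip is calling $\det(A_sC^\top)$ a ``signed'' sum---here $C$ is the unweighted incidence matrix, so by \cref{lem:mttU} all signs agree and the sum is positive, which is exactly what makes the count (rather than mere detection) go through.
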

\begin{proof}
	From the Gallai-Edmonds Decomposition (see \cref{item:4}), it is clear that
	the number of maximum matchings in $G$, $\#\mathrm{MM}(G)$, is equal to $\#\mathrm{PM}(C(G))$ times the number of maximum matchings
	in $A \cup D$. Suppose $M$ is a maximum matching of $\widetilde H$. Then, using
	the properties of the Gallai-Edmonds Decomposition, we know that $M$ can be
	extended to a maximum matching of $A \cup D$ in
	\[
	w(M)
	:= \prod_{(a,i) \in M} w(a,i)
	\cdot \!\!\!\!\prod_{\substack{i \in [k] \\ \text{unmatched in } M}}
	\!\!\!\! W_i
	\]
	ways. Moreover, any maximum matching of $\widetilde H$ is a matching of size
	$|A|$ in $\widetilde H$.
	
	Now we mimic the proof of \cref{thm:kmatch}. Let $T$ be a spanning tree of $H$
	that contains all edges from $s$ to $A$, and exactly $k - |A|$ edges between $s$
	and $[k]$. Clearly, if any vertex in $A$ is a leaf of $T$, then
	\[
	\sum_{\bar{x} \in \{-1,1\}^{n}}
	\mathbf{x}\, [y_1^{|A|} y_2^{\,k-|A|}](T(X,y_1,y_2)) = 0.
	\]
	Therefore, let us assume that $T$ is a tree with all leaves on the $[k]$ side.
	Such a tree $T$ contains a maximum matching of $\widetilde H$, say $M$. For
	each vertex $i \in [k]$ that is incident to $s$ in $T$ (i.e., unmatched in
	$M$), the corresponding edge $(s,i)$ contributes a factor of $W_i$ multiplied to the
	$y_2$ variable in the weight of $(s,i)$. The edges $(a,i)$, $a\in A$ and
	$i\in[k]$, in $T$ (i.e., the matching edges) contribute the factor $w(a,i)x_a$.
	
	Therefore, for such trees $T$, we have
	\begin{align*}
	\sum_{\bar{x} \in \{-1,1\}^{n}}
	\mathbf{x}\, [y_1^{|A|} y_2^{\,k-|A|}](T(X,y_1,y_2))
	= 2^{|A|}
	\times &\left(\substack{\text{number of ways $T$ can be}\\\text{extended to a perfect matching of }}
	\widetilde H\right).
	\end{align*}
	Also, note that every perfect matching corresponds to some tree in $H$.
	
	Thus, we have shown that
	\[
	\sum_{\bar{x} \in \{-1,1\}^{|A|}}
	[y_1^{|A|} y_2^{\,k-|A|}]
	\left(\mathbf{x}\, \det(A_s C^\top)\right)
	= 2^{|A|}
	\times \#\text{MM}(G[A\cup D]).
	\]
	
\end{proof}

\subsection*{Proof for \hyperref[app:kStarCount]{Application 2.3}}
We need the folllowing lemma for the proof.
\begin{lemma}\label{lemma:roots}
	Let $\omega$ be a primitive $k$-th root of unity. Then for any integer $i$
	with $1\le i\le k$,
	\[
	\sum_{x\in\{1,\omega,\omega^2,\dots,\omega^{k-1}\}} x^i
	=
	\begin{cases}
		0, & \text{if } i<k,\\[4pt]
		k, & \text{if } i=k.
	\end{cases}
	\]
\end{lemma}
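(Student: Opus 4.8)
The plan is to rewrite the sum over the set $\{1,\omega,\dots,\omega^{k-1}\}$ as a sum over exponents and then recognize it as a geometric series. Concretely, I would write
\[
\sum_{x\in\{1,\omega,\dots,\omega^{k-1}\}} x^i = \sum_{j=0}^{k-1}\left(\omega^j\right)^i = \sum_{j=0}^{k-1}\left(\omega^i\right)^j,
\]
reducing the claim to evaluating a geometric series with ratio $\omega^i$. This immediately splits the argument into the two cases stated in the lemma.

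For the case $i=k$, I would use that $\omega$ is a $k$-th root of unity, so $\omega^k=1$, whence every term satisfies $\left(\omega^i\right)^j=\left(\omega^k\right)^j=1$, and the sum of $k$ ones is $k$. For the case $1\le i<k$, the key step is to observe that $\omega^i\neq 1$: this is exactly where the \emph{primitivity} of $\omega$ is used, since the order of a primitive $k$-th root of unity is exactly $k$, so $\omega^i=1$ would force $k\mid i$, impossible for $1\le i<k$. Having established $\omega^i\neq 1$, I would apply the closed form of the geometric series,
\[
\sum_{j=0}^{k-1}\left(\omega^i\right)^j = \frac{\left(\omega^i\right)^k-1}{\omega^i-1},
\]
and then simplify the numerator via $\left(\omega^i\right)^k=\left(\omega^k\right)^i=1^i=1$, so the numerator vanishes and the whole sum is $0$.

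The only point requiring any care — and it is the closest thing to an obstacle — is the invocation of primitivity to guarantee $\omega^i\neq 1$ in the denominator, since without it the geometric-series formula would be invalid (division by zero) and the claim would in fact fail for non-primitive roots. Everything else is a routine computation, so I would keep the write-up short and simply make the primitivity hypothesis explicit at the one place it is needed.
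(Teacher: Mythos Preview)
Your proposal is correct and essentially identical to the paper's own proof: the paper also rewrites the sum as $\sum_{j=0}^{k-1}\omega^{ij}$, handles $i=k$ by noting every term is $1$, and for $i<k$ applies the geometric-series formula, using primitivity precisely to guarantee $\omega^i\neq 1$ in the denominator.
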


\begin{proof}
	If $i=k$ then $x^i=(\omega^j)^k=(\omega^k)^j=1$ for every term, so the sum is $k$.
	
	If $1\le i<k$ then $\omega^i\neq 1$, and the sum is a geometric series:
	\[
	\sum_{j=0}^{k-1}\omega^{ij}=\frac{1-(\omega^{i})^{k}}{1-\omega^{i}}
	=\frac{1-1}{1-\omega^{i}}=0,
	\]
	since $(\omega^{i})^{k}=\omega^{ik}=(\omega^k)^i=1$ and the denominator in non-zero since $\omega$ is primitive.
\end{proof}

\begin{theorem}\label{thm:kstar}
	Given an undirected graph $G=(V,E) $ on $n$ vertices, the number of $k$-star coverings in $G$ can be computed in $\OO^*\left(\left(2^{(3\log k) /k}\right)^n\right)$ time and polynomial space. 
\end{theorem}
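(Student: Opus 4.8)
The plan is to follow the template of \cref{thm:bpm,thm:kmatch}: first fix the centre set $V_0$ of size $n/k$, show that the stated inner sum equals $\#\mbox{k-Star}_{V_0}(G)$, and then sum over all $\binom{n}{n/k}$ choices of $V_0$. For a fixed $V_0$, I would begin by invoking the Matrix Tree Theorem (\cref{lem:mttU}) to identify $\det(A_sC^\top)$ with the spanning-tree polynomial $\mathrm{sp}(H)$ of $H$, where the edge $\{s,u\}$ carries weight $y$ and the edge $\{u,v\}$ (with $u\in V_0$, $v\in V\setminus V_0$) carries weight $x_u$. Since $s$ is adjacent only to the $n/k$ vertices of $V_0$, extracting $[y^{n/k}]$ selects exactly those spanning trees that contain all $n/k$ edges incident on $s$.

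The structural heart of the argument is the claim that a spanning tree $T$ of $H$ containing every $s$-$V_0$ edge decomposes, after deleting $s$, into exactly $n/k$ vertex-disjoint stars, each centred at a distinct vertex of $V_0$ with all its leaves in $V\setminus V_0$. I would prove this by noting that deleting $s$ leaves a forest on $V$ with $n-n/k$ edges, hence $n/k$ components; no component can contain two vertices of $V_0$ (else the two $s$-edges would close a cycle in $T$), and since $H$ has no edges inside $V\setminus V_0$, every vertex of $V\setminus V_0$ must attach directly to the unique $V_0$-vertex of its component, so each component is indeed a star. In such a tree the exponent of $x_u$ in the monomial of $T$ is exactly the number of leaves of $u$'s star, namely $\deg_T(u)-1$; multiplying by $\mathbf{x}=\prod_{u\in V_0}x_u$ raises this to $\deg_T(u)$.

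Next I would carry out the roots-of-unity sieve. Summing the monomial $\prod_{u\in V_0}x_u^{\deg_T(u)}$ over $\bar x\in\{1,\omega,\dots,\omega^{k-1}\}^{n/k}$ factors as $\prod_{u\in V_0}\big(\sum_{x_u}x_u^{\deg_T(u)}\big)$, and by the identity in \cref{lemma:roots} each factor vanishes unless $k\mid\deg_T(u)$ and otherwise equals $k$. Since every $u\in V_0$ has $\deg_T(u)\ge1$ and $\sum_{u\in V_0}\deg_T(u)=(n-n/k)+n/k=n=(n/k)\cdot k$, the survival condition $k\mid\deg_T(u)$ forces $\deg_T(u)=k$ for every $u$; that is, every star is a $k$-star. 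Each surviving tree therefore contributes exactly $k^{n/k}$, and these trees are in bijection with the $k$-star covers of $G$ whose centres are precisely $V_0$. Dividing by $k^{n/k}$ yields $\#\mbox{k-Star}_{V_0}(G)$, and summing over $V_0$ counts every cover exactly once, since a cover's $n/k$ distinct centres determine the unique $V_0$ contributing it.

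Finally I would bound the running time. For each $V_0$ there are $k^{n/k}$ substitutions, each requiring one polynomial-time determinant and coefficient extraction, all in polynomial space; ranging over $\binom{n}{n/k}$ centre sets gives total time $\mcO^*\!\big(\binom{n}{n/k}k^{n/k}\big)=\mcO^*\!\big(2^{nH(1/k)+(n\log k)/k}\big)$, which by the stated bound $H(1/k)\le(2/k)\log k$ is $\mcO^*(2^{3(n\log k)/k})$. I expect the main obstacle to be the structural decomposition claim together with the degree-counting step that upgrades ``$k$ divides every degree'' to ``every degree equals $k$''; the one technical caveat is that \cref{lemma:roots} must be applied to exponents possibly exceeding $k$, which is harmless since the same geometric-series computation gives $\sum_x x^i=k\,[\,k\mid i\,]$ for all $i\ge1$.
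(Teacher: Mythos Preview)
Your proposal is correct and follows essentially the same approach as the paper's proof: extract $[y^{n/k}]$ to force all $s$--$V_0$ edges, apply the roots-of-unity sieve of \cref{lemma:roots} to force each $V_0$-degree to be a multiple of $k$, and use the degree-sum constraint to pin each degree to exactly~$k$. Your write-up is in fact more careful than the paper's in two places---you make explicit the structural decomposition of $T\setminus s$ into stars and you flag (correctly) that \cref{lemma:roots} must be extended to exponents $i>k$, which the paper silently assumes.
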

\begin{proof}
	The proof of this is very similar to the bipartite perfect matching case. Let
	$\mathrm{sp}(X,y)=\det(A_s C^\top)$ be the spanning-tree polynomial of $H$.
	Then every monomial of $[y^{n/k}]\mathrm{sp}(X,y)$ corresponds to a spanning
	tree of $H$ such that all the edges between $s$ and $V_0$ are taken by the tree.
	Let $T$ be such a tree, and let its corresponding monomial be $T(X,y)$. Observe
	from \cref{lemma:roots} that if any of the vertices $u \in V_0$ has degree
	$< k$, i.e., degree $< k-1$ excluding the edge incident on $s$, then
	\[
	\sum_{\bar{x} \in \{1,\omega,\ldots,\omega^{k-1}\}^{n/k}}
	\mathbf{x}\,[y^{n/k}](T(X,y)) = 0.
	\]
	
	Therefore, we can assume that in all the trees that survive the sum, all the
	vertices in $V_0$ have degree exactly $k$, in which case $T$ contains a perfect
	$k$-star cover of $G$ with $V_0$ being the set of centres of the stars. Also,
	every perfect $k$-star cover of $G$ with centres allowed from $V_0$ alone, can be extended to such a tree by
	adding all the edges from $s$ to $V_0$. For such a tree $T$, we have
	\[
	\sum_{\bar{x} \in \{1,\omega,\ldots,\omega^{k-1}\}^{n/k}}
	\mathbf{x}\,[y^{n/k}](T(X,y)) = k^{n/k}.
	\]
	
	Thus,
	\[
	\sum_{\bar{x} \in \{1,\omega,\ldots,\omega^{k-1}\}^{n/k}}
	[y^{n/k}]\left(\mathbf{x}\,\mathrm{sp}(X,y)\right)
	= k^{n/k}
	\]
	times the number of $k$-star covers of $G$ where $V_0$ is the set of centre vertices of the stars.
	
\end{proof}
\begin{corollary}
	For all $\epsilon>0$, there exists some $K\in\bbN$ such that $\forall k>K$, the number of $k$-star covers can be counted in $\mcO^*((1+\epsilon)^n)$ time in graphs on $n$ vertices.
\end{corollary}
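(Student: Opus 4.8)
The plan is to read off the base of the exponential running time from \cref{thm:kstar} and exploit the fact that it tends to $1$ as $k$ grows. By \cref{thm:kstar}, for each fixed $k$ the number of $k$-star covers of an $n$-vertex graph can be computed in time $\OO^*(\beta_k^{\,n})$, where I write $\beta_k := 2^{(3\log k)/k}$ for the per-vertex base. Since the task is precisely to produce, for every $\epsilon>0$, a threshold $K$ beyond which $\beta_k \le 1+\epsilon$, the entire statement reduces to a single asymptotic computation on $\beta_k$.

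First I would observe that the exponent $(3\log k)/k \to 0$ as $k \to \infty$, since $\log k = o(k)$. Then, using continuity of the map $x \mapsto 2^{x}$ at $x=0$, I would conclude that $\lim_{k\to\infty}\beta_k = 2^{0} = 1$; this limit is exactly the one already recorded in the remark following \cref{thm:kstar}. Unwinding the definition of this limit at accuracy $\epsilon$ then yields the desired threshold directly: there exists $K\in\bbN$ such that $\beta_k < 1+\epsilon$ for every $k>K$. Here I would note that no monotonicity of $\beta_k$ is needed, as the definition of the limit already supplies the bound uniformly for all $k>K$, and that $\beta_k \ge 1$ for all $k\ge 1$ (the exponent being nonnegative), so $|\beta_k-1|<\epsilon$ is the same as $\beta_k<1+\epsilon$. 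Substituting this bound into $\OO^*(\beta_k^{\,n})$ gives $\OO^*((1+\epsilon)^n)$, as claimed.

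There is no genuine obstacle here; the corollary is a direct asymptotic restatement of \cref{thm:kstar}. The only point deserving a line of care is that $K$ depends on $\epsilon$ alone and not on $n$, so that for each fixed large $k$ the bound $(1+\epsilon)^n$ holds uniformly in $n$. This is immediate once $k$ is fixed, since then $\beta_k$ is a constant; the polynomial-in-$n$ overhead coming from the determinant evaluations and the interpolation over the $k$-th roots of unity is absorbed into the $\OO^*$ notation and does not affect the exponential base.
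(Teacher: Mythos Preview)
Your proof is correct and matches the paper's approach exactly: the paper offers no separate proof for this corollary, treating it as immediate from \cref{thm:kstar} together with the observation (already noted in the remark after the result) that $\lim_{k\to\infty}2^{(3\log k)/k}=1$. Your write-up simply makes this limit argument explicit, which is precisely what is intended.
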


	\noindent\customlabel{app:dirHamP}{Application 0.2: Counting Hamiltonian Paths in directed graphs}

As input, we have a directed graph $G=(V,E)$, and two specified vertices $s ,t\in V$. We are again interested in counting the number of directed hamiltonian paths from $s$ to $t$ in $G$.

The application of the recipe is almost same as the in the undirected case. We explain only the modifications from the undirected case in the following. The construction of $H$ is same as before, except that $H$ is a directed graph. The weight assigned to an edge $(u,v)$ is again $x_ux_v$.

The construction of $C$ is different from that in the undirected case. $C$ is now the in-incidence matrix, where columns are labelled by edges, and rows by vertex, except that the row corresponding to $s$ is removed. The column $(u,v)$ has $1$ in the row $v$, and $0$ everywhere else.

Then we can count the number of directed hamiltonian paths in $G$ from $s$ to $t$ as follows:

\[
\#\mbox{HamPaths}(G,s,t)=\frac{1}{2^n}\sum_{\bar{x}\in\{-1,1\}^n}x_sx_{t}\det\left(A_sC^\top\right)
\]


\begin{theorem}\cite{HeldKarp}
		Given a directed bipartite graph $G$ on $n$ vertices, we can counts the number of directed hamiltonian paths in $G$ in time $O^*(2^n)$ and polynomial space. 
\end{theorem}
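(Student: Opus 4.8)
The plan is to follow the template of Application 0.1 almost verbatim, substituting the Directed Matrix Tree Theorem (\cref{thm:mttD}) for its undirected counterpart and adding one structural observation to account for the arc orientations. First I would invoke \cref{thm:mttD} with root $s$: since $C$ is the in-incidence matrix of $H$ with the row of $s$ deleted and $A_s$ is the weighted incidence matrix of the underlying graph under the orientation given by $G$, we obtain
\[
\det(A_sC^\top)=\mbox{Arb}_s(G)=\sum_{S\in\mathcal{T}_s(G)}\prod_{e\in S}x_e,
\]
a sum over $s$-rooted in-arborescences in which every coefficient is $+1$. Making the substitution $x_e=x_ux_v$ for each arc $e=(u,v)$ turns each arborescence weight into $\prod_{w\in V}x_w^{\deg_S(w)}$, where $\deg_S(w)$ is the degree of $w$ in the underlying undirected tree of $S$.

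Next I would set up the $\pm1$ sieve exactly as in the undirected case. For a fixed arborescence $S$, the quantity $\sum_{\bar x\in\{-1,1\}^n}x_sx_t\prod_{w}x_w^{\deg_S(w)}$ equals $2^n$ when every exponent of the monomial $x_sx_t\prod_w x_w^{\deg_S(w)}$ is even (that is, when $\deg_S(s)$ and $\deg_S(t)$ are odd while every other degree is even) and equals $0$ otherwise. Because the coefficients of $\det(A_sC^\top)$ are nonnegative, there is no cancellation to track before the sieve, so the whole computation reduces to identifying which $s$-rooted in-arborescences have this parity profile.

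The crux is to show that the surviving arborescences are precisely the directed Hamiltonian $s$-$t$ paths. Any tree has at least two leaves, and a leaf $\ell\notin\{s,t\}$ has $\deg_S(\ell)=1$, which is odd, so its term is annihilated by the sieve. Hence a surviving $S$ has all its leaves in $\{s,t\}$, which forces the underlying tree to be a single path between $s$ and $t$. Here the orientation constraint of an in-arborescence does the remaining work: since $s$ has in-degree $0$ and every other vertex has in-degree exactly $1$, the arc leaving $s$ must point away from $s$, and inductively each subsequent arc along the path must be oriented forward (a backward arc would give some interior vertex in-degree $2$). Thus the path is the directed Hamiltonian path $s\to\cdots\to t$, and conversely every directed Hamiltonian $s$-$t$ path is an $s$-rooted in-arborescence with this parity profile; each such path contributes the monomial $\prod_w x_w^2$ and hence exactly $2^n$ to the sum.

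Combining these facts gives $\sum_{\bar x\in\{-1,1\}^n}x_sx_t\det(A_sC^\top)=2^n\cdot\#\mbox{HamPaths}(G,s,t)$, which is the claimed identity after dividing by $2^n$. For the running time, each of the $2^n$ substitutions fixes every vertex variable to $\pm1$, so $A_sC^\top$ becomes an integer matrix whose determinant is computable in polynomial time and polynomial space, and we merely accumulate the running total; this yields the $\OO^*(2^n)$ bound in polynomial space. I expect the only nonroutine step to be the orientation argument of the third paragraph: one must verify that the in-degree-$1$ condition of \cref{thm:mttD} promotes the undirected $s$-$t$ path forced by the sieve into a genuinely directed $s$-to-$t$ path, which is exactly the feature distinguishing this proof from the undirected Application 0.1.
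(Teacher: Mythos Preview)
Your proposal is correct and follows essentially the same approach as the paper: invoke the directed matrix tree theorem with root $s$, substitute $x_e=x_ux_v$, apply the $\pm1$ sieve, and identify the surviving arborescences as exactly the directed $s$-to-$t$ Hamiltonian paths. Your third paragraph even spells out the orientation argument (that the in-degree-$1$ condition forces the $s$-$t$ path to be directed from $s$ to $t$) which the paper compresses into the single clause ``and vice versa''; this is just added detail, not a different route.
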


\begin{proof}
	The proof is very similar to the undirected case. Let
	$p(X) = \mathrm{Arb}_s(X) = \det(A_s C^\top)$ be the $s$-arborescence
	polynomial, where $x_e = x_u x_v$. Consider a monomial of $p$,
	$x_s x_t T(X)$, corresponding to an $s$-arborescence $T$. Again, if any
	variable $x_u$ has odd degree in $x_s x_t T(X)$, then
	\[
	\sum_{\bar{x} \in \{-1,1\}^n} x_s x_t T(X) = 0,
	\]
	and otherwise,
	\[
	\sum_{\bar{x} \in \{-1,1\}^n} x_s x_t T(X) = 2^n.
	\]
	Moreover, every directed Hamiltonian path from $s$ to $t$ is an
	$s$-arborescence with only $s$ and $t$ as the two leaves, and vice versa.
	Therefore,
	\[
	\sum_{\bar{x} \in \{-1,1\}^n} x_s x_t p(X)
	\]
	is $2^n$ times $\#\mathrm{HamPaths}(G,s,t)$.
	
\end{proof}

	\section{Conclusion}\label{sec:concl}
We see our hamiltonian path algorithm as unifying the algorithms for detecting hamiltonian cycles in undirected bipartite~\cite{Bjorklund14} and directed bipartite graphs~\cite{BKK17}.
It remains to be seen whether our technique for finding hamiltonian paths in 
undirected bipartite graphs can be made amenable to general undirected graphs, 
like Bj\"{o}rklund's algorithm~\cite{Bjorklund14} potentially paving the
way to break the $\OO^*(2^n)$ barrier for directed Hamiltonicity.

Turning to matchings, for counting maximum matching in general graphs we rely on~\cite{BjorklundRyser} as a black box for counting perfect matchings in general graphs. It would be very interesting if our technique could be used to simplify the algorithm of~\cite{BjorklundRyser}.

	\bibliographystyle{alpha}
	\bibliography{biblio}
	\appendix
	\section{Matrix Tree Theorems}\label{app:a}
We present a proof of directed and undirected matrix tree theorem, that only rely on Cauchy-Binet Theorem. 

\mttU*
\begin{proof}
	From Cauchy-Binet Theorem, we can write,
	\[
		\det\left(A_rC^\top\right) = \sum_{\substack{S\subset E(G)\\|S|=|V(G)|-1}} \det(A_r[S])\det(C[S]).
	\]
	Hence, to prove the theorem, it would be enough to show that $\det(A_r[S])\det(C[S])$ is $\prod_{e\in S}x_e$ if $S\in \mathcal{T}(G)$, the set of all spanning trees of $G$, and zero otherwise.
	
	That is, we need to show that, for $S\subseteq E(G), |S|=|V(G)|-1$, 
	\begin{itemize}
		\item[(a)] if $S$ is not a spanning tree of $G$, then $\det (C[S])=0$,
		\item[(b)] if $S$ is a spanning tree of $G$, then  $\det(A_r[S])$ is $\sigma(S)\prod_{e\in S}x_e$, where $\sigma(S)\in\{-1,1\}$, and 
		\item[(c)] if $S$ is a spanning tree of $G$, then $\det(C[S])=\sigma(S)$.   
	\end{itemize} 
	
	\emph{Proof of (a)}. $S\notin\mathcal{T}(G)$, then $\det(C[S])$ is zero. Since $S$ is not a spanning tree, as per the premise, there must exist a connected component $Q$
	in $S$ that does not contain $r$. Then, summing together the rows of $C[S]$ corresponding to vertices in $Q$, we get an all zero row. This is because, for every column corresponding to an $e$ edge in $G[Q]$ has both its non-zero entries (of opposite sign) in the rows corresponding to the endpoints of $e$, both of which must be in $Q$. But this means that rows of $C[S]$ are not independent and hence $\det(C[S])$ must be zero.
	
	\emph{Proof of (b) and (c)}. 
	The idea is to show that the matrices $A_r[S]$ and $C[S]$ can be made lower triangular, by the same reordering of their rows and columns, such that the diagonal entries in the resulting matrices are all non-zero. Let the columns of $A_r[S]$ be indexed by $e_{i_1},\ldots,e_{i_{n-1}}$.  
	Let us assume that the tree $S$ is rooted at $r$. We describe the desired row and column permutations by an	
	iterative process that removes a leaf vertex in each round. We start with a leaf vertex $v$ (ties broken lexicographically) of $S$. Let $\{u,v\}$ be the edge in $S$ that is incident on $v$. We remove both the vertex $v$ and the edge $\{u,v\}$ from $S$, and call the resulting tree $S$. In general, in the $i$th round, we take a leaf vertex $w$ of $S$ and the edge $e$ incident on $w$, and remove them from $S$. We stop when $S$ is only the vertex $r$. Let $\pi_v=v_{i_1},v_{i_2},\ldots,v_{i_{n-1}}$ be the vertices of $S$ in the sequence they are removed from $S$, and let $\pi_e=e_{j_1},e_{j_2},\ldots,e_{j_{n-1}}$ be the edge of $S$ in the sequence they are removed. Clearly, $\pi_v$ and $\pi_e$ are permutations of the vertices and edges respectively, of $S$. We claim that, when the rows and columns of $A_r[S]$ are rearranged as in $\pi_v$ and $\pi_e$ respectively, the resulting matrix is a lower triangular matrix with all non-zero entries on the diagonal. To see this, we note that for any $k\in[n-1]$, the vertex $v_{i_k}$ has no edge $e_{j_\ell}$, for $k<\ell\leq n-1$ incident on it, since $v_{i_k}$ is a leaf node in the $k$th iteration. Also, the edge $e_{j_k}$ must be incident on $v_{i_k}$. Taken together, it means that, in the reordered matrix, $A'_r[S]$, the $k$th diagonal entry is non-zero and and all the entries in the row $k$ to the right of the diagonal are zero. Hence $A'_r[S]$ must be in the desired shape. Since, the matrices $A_r$ and $C$ agree on zero/non-zero entries including the sign, the same column and row permutations $\pi_v$ and $\pi_e$, must make $C[S]$ lower triangular with all non-zero entries on the diagonal. We know that the determinant of a lower triangular matrix is product of the diagonal entries, the determinant of $A'_r[S]$ should be $\tau(S)\prod_{e\in S} x_e$, where $\tau(S)$ is $1$ or $-1$ depending on the number of negative values on the diagonal. Similarly, the determinant of the reordered matrix $C[S]$, $C'[S]$, must be $\tau(S)$. In particular, working back to the original matrices we must have that, $\det(A_r[S])=\mbox{sign}(\pi_v)\det(A'_r[S])\mbox{sign}(\pi_e)=\sigma(S)\prod_{e\in S}x_e$ and $\det(C[S])=\mbox{sign}(\pi_e)\det(C'[S])\mbox{sign}(\pi_v)=\sigma(S)$.     
\end{proof}

Now, we see a proof of the directed matrix tree theorem, which gives a way to count the number of in-arborescences rooted at a particular vertex, via a determinant computation, similar to the undirected matrix tree theorem.
 

\mttD*
\begin{proof}
	The proof is similar to the proof of~\cref{lem:mttU}. 
	From Cauchy-Binet Theorem, we can write,
	\[
	\det\left(A_rC^\top\right) = \sum_{\substack{S\subset E(G)\\|S|=|V(G)|-1}} \det(A_r[S])\det(C[S]).
	\]
	From the proof of~\cref{lem:mttU} for part (a), we have that $\det(A_r[S])$ must be zero if $S$ is not a spanning tree of the undirected graph underlying $G$.
	Since $C$ is a $0$-$1$-valued matrix, with exactly one non-zero entry in each column, for $C[S]$ to be non-singular $C[S]$ must be a permutation matrix, i.e., each row and column of $C[S]$ has exactly one non-zero entry. This ensures that in the graph induced by $S$, the all the vertices, except for $r$ (row $r$ is not present in $C$), must have in-degree exactly one, by definition of $C$. Hence, we are certain that $\det(A_r[S])\det(C[S])$ is non-zero if and only if $S$ is an arborescence rooted at $r$. In fact, we can show that $\det(A_r[S])\det(C[S])$ is $\prod_{e\in S}x_e$ if $S$ is an $r$-rooted in-arborescence of $G$. We can use the same reordering of rows and columns of $A_r[S]$ and $C[S]$, as in the proof of~\cref{lem:mttU} for part (b) and (c) to bring convert them into lower triangular matrices. The orientation as per the arborescence, ensures that the signs of both the determinant are same. 
\end{proof} 

\end{document}